\documentclass[letterpaper, 11pt]{article}

\usepackage[margin=1in]{geometry}
\usepackage[T1]{fontenc}
\usepackage{lmodern}
\usepackage{microtype}

\usepackage{amsmath,amssymb,amsfonts, amsthm}
\usepackage{mathtools}

\theoremstyle{plain}
\newtheorem{theorem}{Theorem}[section]
\newtheorem{proposition}[theorem]{Proposition}

\theoremstyle{definition}
\newtheorem{definition}[theorem]{Definition}

\providecommand{\mathbbm}[1]{\mathbf{#1}}

\usepackage{hyperref}

\hypersetup{bookmarksdepth=3, bookmarksnumbered=true, bookmarksopen=true,
  bookmarksopenlevel=1, pdfpagemode=UseOutlines}
\usepackage{booktabs}
\usepackage{etoc}
\AtBeginDocument{\etocdepthtag.toc{mainmatter}}
\usepackage[inline]{enumitem}
\usepackage{algorithm}
\usepackage{algpseudocode}
\usepackage[lambda,advantage,operators,sets,adversary,landau,probability,notions,logic,ff,mm,primitives,events,complexity,asymptotics,keys]{cryptocode}

\usepackage[capitalize]{cleveref}

\newcommand{\multilinePr}[2]{\Pr\left[\begin{aligned}#1\end{aligned} \;:\; \begin{aligned}#2\end{aligned}\right]}

\newcommand{\setup}{\mathsf{Setup}}
\newcommand{\KG}{\mathsf{Keygen}}
\newcommand{\prv}{\mathsf{Prove}}
\newcommand{\vfy}{\mathsf{Vfy}}

\newcommand{\pcrs}{\mathsf{crs}_P}
\newcommand{\vcrs}{\mathsf{crs}_V}

\newcommand{\rel}{\mathcal{R}}
\newcommand{\xwDist}{\mathcal{D}}

\newcommand{\ro}{\mathcal{H}}
\newcommand{\roSet}{\Omega_\lambda}
\newcommand{\bits}{\left\{ 0,1 \right\}}

\newcommand{\polylog}{\mathrm{polylog}}

\renewcommand{\negl}{\mathsf{negl}(\lambda)}

\newcommand{\Com}{\mathsf{Com}}
\newcommand{\cm}{\mathsf{cm}}
\newcommand{\nonce}{\mathsf{nonce}}

\begin{document}

\title{One Node, Two Roles\\[0.4em]
  {\large Simultaneous Contests for Validation and Attention in Rollups}}
\author{
  Pranay Anchuri\\
  {\small Offchain Labs}\\[-2pt]
  {\small\texttt{panchuri@offchainlabs.com}}
  \and
  Ben Berger\\
  {\small Offchain Labs}\\[-2pt]
  {\small\texttt{bberger@offchainlabs.com}}
  \and
  Matteo Campanelli\\
  {\small Offchain Labs \& University of Tartu}\\[-2pt]
  {\small\texttt{matteo@offchainlabs.com}}
  \and
  Akaki Mamageishvili\\
  {\small Offchain Labs}\\[-2pt]
  {\small\texttt{amamageishvili@offchainlabs.com}}
}
\date{}
\maketitle

\begin{abstract}

An optimistic rollup is safe as long as at least one honest validator executes its state transition function (STF) and disputes any assertion that is inconsistent with the result of the execution. Allowing anyone to participate, however, does not
give incentive to do so: as long as every assertion is correct, verifying assertion correctness
goes unpaid. 

Attention mechanisms address this by paying nodes to execute the rollup's STF, regardless of whether they participate in the validation process. Since a single node
operator can back many registered identities with a single execution, paying
per identity does not buy execution diversity, that is, the number of operators
that independently execute.

In this work we provide a new modeling framework for this problem.
We formalize the underlying cryptographic primitive as a new notion, arguments with registered provers, whose properties include a form of \emph{non-amortizability} that, to our knowledge, has not been studied for modern succinct cryptographic proofs before.

We design validation and attention reward mechanisms and analyze them as two coupled
Tullock-like contests, in which an operator pays once for the required execution and then chooses how
many identities to deploy in each role. 
We quantify the attainable range of execution diversities in equilibrium as a function of the marginal cost of deploying an extra attention identity.

We use our framework as a lens through which we compare
two attention mechanisms, TRACE (MARBLE 2026) and Proof of Diligence (AFT 2024).
In particular, we show that as opposed to Proof of Diligence, in TRACE the marginal cost mentioned above can be adjusted to support a higher execution diversity.

As a case study, we apply our results to Arbitrum One, a widely deployed optimistic rollup.
\end{abstract}

\section{Introduction}
\label{sec:intro}
Optimistic rollups rely on designated parties, named \emph{validators}, to periodically post commitments to snapshots of the rollup's state on the parent chain.
The safety of these rollups is achieved through dispute resolution protocols which ensure that only commitments, named hereafter \emph{assertions}, that correspond to the correct state get accepted by the protocol as canonical.

One property of these protocols is the existence of a challenge window --- usually set at one week --- in which validators can dispute the correctness of a posted assertion by posting a competing assertion alongside some predetermined stake.
Furthermore, these protocols are designed so that only the unique correct assertion, if indeed posted by some validator, is guaranteed to get accepted.
This establishes a very desirable 1-of-N safety assumption for optimistic rollups, whereby only a single honest validator is required to guarantee the rollup's safety.

However, the 1-of-N assumption crucially requires that some party be \emph{willing to act} as the honest validator. Doing so is costly: the party must execute the rollup's state transition function (hereafter STF) to compute the correct assertion and must lock up the required stake, which in the case of Arbitrum's dispute protocol BoLD~\cite{bold} is at least 3600 ETH~\cite{arbitrum-aip-bold}.
Furthermore, in case of a dispute honest validators need to be prepared to submit all required protocol moves to guarantee honest confirmation.
Finally, validators are typically not compensated for their continuous work as long as there is no incorrect assertion to challenge.

The two largest optimistic rollups, Base and Arbitrum, have made the validator role open to anyone. However, allowing anyone to participate does not give incentive to do so.  Indeed, the validator role requires expertise which, along with the combined cost of execution and the locked stake, renders the role unsuitable for most parties.
This is the verifier's dilemma of Luu et al.~\cite{verifiersdilemma}
in its rollup form.

\paragraph{Attention Nodes}
To mitigate the complexity of rollup validation, previous work \cite{mamageishvili2023, pod, trace} has proposed rewarding designated \emph{attention nodes} for executing the STF, regardless of whether they participate in the validation protocol.

Rewarding attention nodes is desirable for two reasons. First, an attention node's work is a prerequisite for validation, so a larger pool of attention nodes, any of which could step in as a validator, deters malicious parties from posting false assertions.
Second, attention nodes can provide necessary block data to light clients and wallets~\cite{frauddataavailability}.

\paragraph{Execution Diversity, the Sybil Problem
and Isolating Attention from Validation}
Notably, the benefits of attention as described above grow with the number of independently operated nodes, not with
the number of identities registered as attention nodes behind a single executing node.
In the Proof-of-Diligence mechanism~\cite{pod} and the TRACE mechanism~\cite{trace}, each registered identity submits a \emph{proof of execution}, which one of the rollup smart contracts verifies.  

However, as nothing ties an identity to a distinct machine, a single operator can register many independent identities which in turn submit their proofs based on the same execution trace.
We hereafter refer to these identities as \emph{sybils}, and we call the number of distinct operators independently executing the rollup's STF the \emph{execution diversity} of the rollup.

Proof-of-Diligence~\cite{pod} requires each registered identity to lock stake in order to alleviate the sybil problem.
Furthermore, they show that with a certain whistleblower mechanism independent node operators do not pool their execution efforts in equilibrium.
However, the obtained execution diversity is not analyzed and remains an open question.

Moreover, prior work analyzes attention in isolation, without modeling how taking on one role affects the incentive to take on the other. In practice, the two roles are coupled: both build on the same execution of the STF. Once an operator executes for one role, that cost is sunk, and taking on the other requires only the additional per-identity cost. Yet, to our knowledge, no prior work analyzes the economics of attention and validation jointly.

\subsection{Our Contributions}
\label{sec:contributions}

\paragraph{Cost conditions for execution diversity}
We analyze lottery-based, reward mechanisms for attention and validation. A node operator pays a single entry cost to participate in one or both roles, corresponding to the cost of obtaining the execution trace. Each role then adds a marginal cost per registered sybil and pays out a total reward that depends on the number of registered sybils. Assuming risk-neutral operators, we model their utilities by their expected payout, so that every sybil receives, in expectation, an equal share of its role's reward.  The resulting game resembles two simultaneous Tullock-like contests coupled through the shared entry cost.
We derive conditions on the entry and marginal sybil costs that support a given target execution diversity in equilibrium.

\paragraph{A registered prover model for attention mechanisms} 
We introduce arguments with registered provers, a cryptographic framework for attention mechanisms, and formalize
\textit{non-amortizability}, which captures the per-identity work needed by our game-theoretic analysis.
We use this framework to compare the cryptographic guarantees of TRACE~\cite{trace} and Proof of Diligence~\cite{pod}.

As an additional side contribution, we sketch a construction with stronger non-amortizability guarantees than both TRACE and Proof of Diligence.  

The construction, however, requires a more expensive prover compared to prior schemes. 
We leave a concrete cost estimate for this construction to future work.
\paragraph{A worked example on Arbitrum One}
We apply our theoretical results to the case of Arbitrum One.
Our measurements put the attention sybil cost at $d_a = \$0.91$ per hour at the default trace length of $N = 2^{20}$. At this cost, non-decreasing reward functions support an execution diversity of at most $A = 2$. We show that increasing the trace length to $N = 2^{24}$ raises this maximum to $A = 29$, while reward functions that decrease beyond some threshold support $A = 30$ at the baseline cost. We discuss the resulting trade-offs between costs and rewards.

\paragraph{Roadmap.}
The rest of the paper is organized as follows.
\Cref{sec:overview} gives an overview of our results.
\Cref{sec:mechanism} defines arguments with registered provers, casts TRACE as
one, and states its non-amortizability.
\Cref{sec:game} develops the two-contest model; \cref{sec:example} instantiates it
on Arbitrum One.
We discuss our additional construction in \Cref{sec:variant-limitation}  and review related work in \cref{sec:related}.

\section{Overview}
\label{sec:overview}

An optimistic rollup applies a batch of transactions $T$ to a prior state $S'$
and posts an assertion about the resulting state $S$ on the settlement chain.
Executing the batch also produces an \emph{execution trace} $E$, the sequence of
intermediate values held during execution, whose length $N$ the deployer fixes
when instrumenting the state transition function.\footnote{$E$ is not the full
step-by-step computation history that a validity rollup commits to when proving
the state transition. The mechanisms we study prove a statement about a vector of
$N$ values derived from the execution, and $N$ is a parameter of the attention
mechanism rather than a property of the batch.}
Standard node interfaces do not expose it, so a node that skipped execution does
not have it, while the state transition function fixes a commitment to it that
the chain already stores.
The mechanisms we study pay an identity only for a \emph{claim}, which is what an
identity submits when its proof passes an eligibility threshold.
That threshold test is the lottery of \cref{sec:contributions}.
\Cref{sec:game} does not model individual draws and works with the expected
payout, in which each sybil earns an equal share of its role's reward.
Both mechanisms also assume that an operator which has not executed the batch
does not obtain $E$ from one that has.
Otherwise a single execution could supply claims for many operators, and the
reward would buy claims rather than executions.
Neither rules this out cryptographically.
TRACE argues that buying the trace is unprofitable once the bandwidth cost of
transferring it exceeds the expected reward~\cite{trace}.
Proof of Diligence treats resale as collusion, which we discuss below.

\paragraph{The game.}
\Cref{sec:game} models how operators respond to a reward paid per sybil.
An operator pays a cost $c$ once to run a node, each attention sybil it
registers costs a further $d_a$, and the reward is split among sybils in
proportion to how many each operator controls.
The outcome a rollup wants is one identity per operator, as this would result in the minimal offered reward for otherwise the same amount of independent operators.
When the total reward does not decrease as identities are added, this outcome is
an equilibrium only if the number of attention operators satisfies
\[
  A \le 1 + \frac{2d_a}{c}.
\]
The larger the per-identity cost $d_a$ relative to the node cost $c$, the more
independent operators the reward can sustain.
When an extra identity is cheap, an existing operator captures additional reward
more cheaply by registering another identity than a newcomer can by running
another node, so the reward buys identities rather than operators.
A reward function that decreases past a target avoids this issue, whereas under a
non-decreasing reward, $d_a$ must be adjusted to support more independent attention node operators in equlibrium.

\paragraph{Modeling the per-identity cost.}
To reason about $d_a$, \cref{sec:mechanism} introduces a new notion of cryptographic proofs, \textit{arguments with registered
provers}.
In it, each identity holds a registered key pair. An identity holder can produce a certificate under its key, a \textit{proof},  ensuring 
that the prover knows the execution trace behind a commitment, in the context of
a given block. 

The property that connects this notion to $d_a$ is non-amortizability
(\cref{def:non-amort}).
Execution is paid once per node and shared by all of an operator's identities,
so only the work of producing a claim
can contribute to $d_a$, and only to the
extent that an operator cannot share it across its keys.
Non-amortizability with overhead $\Delta$ says that producing valid proofs under
$n$ distinct keys takes time roughly $n\Delta$, so $\Delta$ is a lower bound on the
proving component contributing to $d_a$ that no batching across identities removes.

\paragraph{TRACE in this framework.}
We adopt a variant of TRACE~\cite{trace}. Here, a node that has executed a block proves,
with a succinct inner-product argument, that it knows the trace behind the
commitment the chain already stores; it then signs the proof with its registered
key.
In TRACE, reward eligibility is determined from the proof, so an identity incurs the proving cost before learning whether it qualifies for a reward.
We cast our main variant of TRACE as an argument with registered provers
(\cref{sec:ipa-construction}) and show that it
is non-amortizable with overhead
$\Delta=\Omega(\log N\cdot\lambda^{1-\epsilon})$, where $N$ is the trace length and $\lambda$ is the security parameter
(\cref{thm:ipa-nonamort,tab:properties}).
This overhead stems from the argument system underlying TRACE, a Bulletproofs-like inner-product argument~\cite{bootle2016ipa,bulletproofs}, which we instantiate with the publicly verifiable variant of Daza et al.~\cite{drz}.
Its prover runs $\log N$ rounds, each halving the committed vectors by folding them with a Fiat--Shamir challenge.
Since the public key seeds these challenges, every folding step yields a key-dependent combination of the entire trace, on which the next round computes group elements that no other identity can reuse.

\paragraph{Implications for TRACE and Proof of Diligence.}
We then use the framework to compare TRACE with Proof of
Diligence~\cite{pod} (\cref{tab:properties}).
In Proof of Diligence, each identity evaluates a verifiable random function
on the Merkle roots of the new state and of the execution trace, and it is
paid if the output falls below a threshold that depends on its stake.
The proof certifies the roots, not the trace, and its input is shared by every
identity of an operator, so Proof of Diligence has $\Delta=O(1)$.
Its per-identity cost is therefore a constant that the deployer cannot raise,
and under a non-decreasing reward the bound on $A$ stays fixed.
In TRACE, $\Delta$ grows with the trace length, which makes the trace length a
parameter that raises $d_a$ and with it the bound.
On Arbitrum One this gives $A\le 2$ at trace length $2^{20}$ and $A\le 29$ at
$2^{24}$ (\cref{sec:example}).
The guarantee is weaker than the honest prover's cost: $\Delta$ grows only
logarithmically in $N$ while the honest prover does work which is linear in $N$.
\Cref{alg:both} shows the two claim procedures side by side.

\paragraph{Additional observations on Proof of Diligence.}
Besides allowing nodes to amortize costs, Proof of Diligence does not guarantee that a
claimant knows the trace.
Its proof certifies the root $r_E$, so a node that obtains the correct root from
another node produces an accepting proof without having executed the block.
A mechanism whose proofs are knowledge sound in the sense of \cref{def:ks}, as
TRACE's are~\cite{trace}, rules this out, since an accepting proof lets an
extractor recover the trace itself.
Proof of Diligence does not completely ignore the issue, 
but it addresses it with an
incentive-based approach rather than a cryptographic-based argument.
It assumes that $r_E$ can be obtained only by executing, lets other watchtowers recompute $r_E$ and challenge an incorrect proof during a challenge period of the watchtower protocol (distinct from the rollup's dispute window), and
treats selling a computed $r_E$ as a collusion strategy; with an added
whistleblower contract, independent diligence becomes the unique
equilibrium~\cite[Cor.~8]{pod}.
That guarantee holds for rational and independent watchtowers.
However, it does not apply to an operator that computes $r_E$ once and uses it for all its own
identities, which the analysis counts as diligence rather than collusion.
As we argue, however, this behavior is undesired and the current Proof of Diligence construction cannot provide the stronger property we believe is necessary in an attention mechanism, namely, attention claims that guarantee knowledge of the entire execution trace.

\paragraph{A construction with stronger non-amortizability.}
The logarithmic overhead of TRACE comes from the fact that not all inputs to
the inner-product argument have high computational entropy.
In \cref{sec:prf-cons}, we instead derive a secret vector from a registered PRF
key and the block, so every identity folds the trace against fresh pseudorandom
values on every block, and a general-purpose commit-and-prove argument shows that
the vector was derived correctly.
Each identity then incurs fresh work proportional to the trace length, which raises the
overhead to linear in the trace length, within a logarithmic factor of the honest
prover's cost.
This stronger dependence of $\Delta$ on trace length is desirable because trace
length is our main lever for raising the unavoidable per-identity proving cost
and hence $d_a$.
The argument treats the PRF as a random oracle, because the adversary of
\cref{def:non-amort} holds every key, and the construction pays for the stronger
bound with a more expensive prover.

\begin{algorithm}[t]
\caption{Claim procedures for one block and one registered identity $i$.
}
\label{alg:both}
\begin{minipage}[t]{0.46\textwidth}
    \fussy
\textbf{Proof of Diligence}~\cite[Alg.~1]{pod} on $(S', T, r_S, \alpha_i)$
\begin{algorithmic}[1]
  \State $(S, E) \gets \mathsf{apply}(S', T)$
  \State $r'_S \gets \mathsf{Merklize}(S)$
  \State $r_E \gets \mathsf{Merklize}(E)$
  \State $(d, \pi) \gets \mathsf{VRF}_{\mathit{sk}_i}(r'_S \,\|\, r_E)$
  \State \textbf{if} $d < \phi(\alpha_i)$ \textbf{then submit} $(d, \pi)$
  \State \textbf{if} $r_S \neq r'_S$ \textbf{then raise alarm}
\end{algorithmic}
\end{minipage}\hfill
\begin{minipage}[t]{0.48\textwidth}
\textbf{TRACE} (\cref{sec:ipa-construction}) on $(S', T, \mathit{sk}_i)$
\begin{algorithmic}[1]
  \State $(S, E) \gets \mathsf{apply}(S', T)$
  \State $C_E \gets \mathsf{Commit}(E)$ \Comment{from chain}
  \State $v \gets \textstyle\sum_j E[j]$
  \State $\theta \gets H(C_E, \mathit{pk}_i, v, \mathsf{blockhash})$
  \State $\pi \gets \mathsf{IPA.Prove}\bigl(C_E, [S]_1, v;\, E, \mathbf 1;\, \theta\bigr)$
  \State $\sigma \gets \mathsf{Sign}_{\mathit{sk}_i}(C_E, \mathsf{blockhash}, v, \pi)$
  \State \textbf{if} $\mathsf{isReward}(C_E, \pi) \le 2^{\lambda - \tau}$
         \textbf{then submit} $(v, \pi, \sigma)$
\end{algorithmic}
\end{minipage}
\end{algorithm}

\subsection{Related Work}
\label{sec:related}

The game described in Section~\ref{sec:game} has the special property that each player's revenue is proportional to their share of the total cost incurred by the players (ignoring potential contast entry costs).
The classic example of such a game is the Tullock contest, introduced in two papers by Gordon Tullock~\cite{tullock1975efficient} ,\cite{tullock1980efficient}.  That model has a player's revenue equal exactly their share of the total cost, and there are no entry costs.
Chung~\cite{chung1996rent} was the first to extend the basic model by letting the total reward be an increasing function of the total cost, and study the resulting equilibria. An almost identical model has been considered by \cite{johnson2023concave} in the context of batched decentralized exchanges, establishing lower bounds on the game's price of anarchy.
A variant with fixed total reward and entry costs has been analyzed by \cite{AppelbaumKatz1986}, showing that as the entry cost increases there are less entrants in equilibrium.
Games with multiple simultaneous Tullock-like contests have also been considered \cite{grossmann2021entry}, \cite{AzmatMoller2009},
\cite{Sela2025Unsolicited}.  These all have unique elements that do not capture the game studied in section~\ref{sec:game}.

Some forms of non-amortizability have been studied in the context of proofs of ``useful'' work \cite{EPRINT:BRSV17b,EPRINT:KomWei25}. These are variants of proofs of work that also permit checking the output of a deterministic computation through an ad-hoc scheme. In contrast to prior works, we model and study  non-amortizability of \textit{existing} succinct arguments over \textit{non-deterministic} computations.

\section{Attention Mechanisms as Non-Amortizable Cryptographic Proofs}
\label{sec:mechanism}

\subsection{Intuitions behind our modeling choices}
We model attention mechanisms as a variant of (non-interactive) cryptographic proof systems~\cite{thalerbook,snarkhunt} (also called \textit{arguments} in settings where the adversary is computationally bounded).
Such schemes consist of a proving (resp. verification) algorithm through which one can certify (resp. check) whether an NP statement is true. The output of proving is a string $\pi$, a proof, which we will require to be succinct (intuitively, shorter than the NP witness showing the validity of the statement).
Proof schemes may guarantee a specific form of soundness property,  \textit{knowledge soundness}, if, informally speaking, an accepting proof guarantees that the prover actually knows the witness behind the true statement.

In our modeling, we follow cryptographic proofs as inspiration, but diverge from them in a few aspects in order to obtain something tailored to our setting:

\begin{itemize}
\item We assume each operator has a public key associated to its identity, as well as a corresponding secret key. The latter can be used to generate a proof. We require (through a property we call \textit{designated-proving}) that knowledge of a secret key corresponding to identity X is required to produce a proof associated to X.
\item Since our goal is to incentivize that operators have actually \textit{watched} execution, we make a very specific choice when modeling knowledge soundness. We model ``knowledge'' as the explicit \underline{operational} ability to retrieve the trace by interacting with a prover producing accepting proofs\footnote{More specifically, we model this through \textit{black-box} knowledge soundness, where an efficient extractor machine can rewind the adversary having access to it in contrast to \textit{non-black-box} forms of knowledge soundness where the extractor may be a machine that is merely guaranteed to exist but that may not be explicitly defined~\cite{AFRICACRYPT:CGKS23}. In this sense, our modeling reflects the difference between proofs of retrievability and proofs of data-possession, where the former enforces the ability to retrieve data that had been outsourced by a client~\cite{EPRINT:BowJueOpr08a}.}. Also, the NP relation we consider simply captures opening of a commitment (a hash).
\item One of the desired properties of our mechanism, and one that plays an important role in our game-theoretic analysis, is that an operator who registers many identities (Sybils) cannot amortize its costs across them: each additional identity must carry an irreducible cost of its own. We model this through a non-amortizability property for proving generation. The latter is novel in the realm of proving schemes and may be of independent interest.
\end{itemize}

\subsection{Arguments with Registered Prover}

All definitions  are in the random oracle model. We denote by $\roSet$ the set of all functions $\bits^* \to \bits^\lambda$; $\ro \sample \roSet$ denotes sampling of a random oracle.

\begin{definition}[Argument with Registered Provers]
    \label{def:dp-arg}
A (succinct) argument system with registered prover with message space $\mathcal{M}=\{\mathcal M_\lambda\}_\lambda$ is a tuple of efficient algorithms $\left( \setup, \KG, \mathsf{Com}, \prv, \vfy \right)$, of which only $\prv$ and $\vfy$ have access to the random oracle, with the following syntax:
\begin{itemize*}[label={}, itemjoin={}]
\item $\setup(1^\lambda) $ outputs a common reference string (CRS) pair $(\pcrs,\vcrs)$ for prover and verifier respectively\footnote{We note that, unless specified otherwise, all algorithms are randomized. We also assume without loss of generality that $\pcrs$ contains $\vcrs$. }; 
\item $\KG(\pcrs)$ outputs a secret--public pair $\left( \sk, \pk \right)$;
\item $\mathsf{Com} \left( \pcrs, \vec{a} \right)$ deterministically produces a commitment $\cm$ to a string $\vec{a} \in \mathcal M_\lambda$;
\item $\allowbreak\prv^\ro \left( \pcrs, \sk, \vec{a}, \nonce \right)$  produces a proof $\pi$ (see also verification);
\item $\vfy^\ro (\vcrs, \allowbreak \pk,$  $\cm, \nonce, \pi )$ outputs a bit for acceptance/rejection of a proof $\pi$. Intuitively verification checks that $\pi$ is a valid proof ``of knowledge of the commitment opening of $\cm$ produced by the owner of $\mathsf{SecKey}(\pk)$ in the context given by nonce $\nonce$''.
\end{itemize*}
An argument system with registered provers should satisfy completeness, binding, knowledge soundness, succinctness, designated proving and non-amortizability, defined below. 
\end{definition}

\noindent
\textit{Completeness, binding, succinctness and designated proving.}
 Completeness (\cref{def:completeness}) states that for an honestly generated CRS pair and key pair, any $\vec{a} \in \mathcal M_\lambda$ and any string $\nonce \in \bits^*$, the proof $\pi \gets \prv^\ro \left( \pcrs, \sk, \vec{a}, \nonce \right)$ satisfies $\vfy^\ro \left(  \vcrs, \pk, \Com(\pcrs, \vec{a}), \nonce, \pi \right) = 1$ with probability $1$.
 The other properties are mostly standard (\cref{def:binding,def:succinctness,def:designated-proving}).
 Binding requires that no adversary can output two strings producing the same commitment;
 Succinctness requires proofs and commitments to be short. Designated proving states that an adversary without $\sk$ cannot produce an accepting proof under $\pk$.

\begin{definition}[Completeness]
\label{def:completeness}
For all $\lambda \in \mathbb{N}$, all $\vec{a} \in \mathcal M_\lambda$ and all $\nonce \in \bits^*$:
\[ \multilinePr{
    & \vfy^\ro \bigl(  \vcrs, \pk,\\
    & \quad \Com(\pcrs,\protect\vec{a}), \nonce, \pi \bigr)=1
}{
    & \ro \sample \roSet\\
    & (\pcrs,\vcrs) \gets \setup \left( 1^\lambda \right)\\
    & \left( \sk, \pk \right) \gets \KG\left( \pcrs \right)\\
    & \pi \gets \prv^\ro \left( \pcrs, \sk, \protect\vec{a}, \nonce \right)
} = 1 \]
\end{definition}

\begin{definition}[Binding]
\label{def:binding}
For all PPT adversaries $\adv$ and all $\lambda \in \mathbb{N}$:
\[ \multilinePr{
    & \protect\vec{a} \neq \protect\vec{a}' \ \wedge\ \protect\vec{a},\protect\vec{a}' \in \mathcal M_\lambda \\
    & \quad \wedge\ \Com(\pcrs, \protect\vec{a}) = \Com(\pcrs, \protect\vec{a}')
}{
    & (\pcrs,\vcrs) \gets \setup \left( 1^\lambda \right)\\
    & (\protect\vec{a},\protect\vec{a}') \gets \adv \left( \pcrs \right)
} \leq \negl \]
\end{definition}

\begin{definition}[Succinctness]
\label{def:succinctness}
For all $\lambda \in \mathbb{N}$, all $\vec{a} \in \mathcal M_\lambda$, all $\nonce \in \bits^*$, all $\ro \in \roSet$, all $(\pcrs,\vcrs)$ in the support of $\setup(1^\lambda)$, all $\left( \sk, \pk \right)$ in the support of $\KG(\pcrs)$ and all $\pi$ in the support of $\prv^\ro(\pcrs, \sk, \vec{a}, \nonce)$:
\[ \abs{\pi} = \poly \cdot \polylog \left( \abs{\vec{a}} \right),
   \qquad
   \abs{\Com(\pcrs,\vec{a})} = \poly, \]
and $\vfy^\ro$ runs on input $(\vcrs, \pk,$ $\Com(\pcrs,\vec{a}), \nonce, \pi)$
in time
\[ \poly \cdot \left( \abs{\nonce} + \polylog \left( \abs{\vec{a}} \right) \right). \]
\end{definition}

The following property intuitively states that no adversary without a secret key can produce an accepting proof for the corresponding public key, even for a commitment whose opening it knows.
\begin{definition}[Designated Proving]
\label{def:designated-proving}
For all PPT adversaries $\adv$ and for all $\lambda \in \mathbb{N}$:
\[ \multilinePr{
    & \vfy^\ro \bigl(  \vcrs, \pk,\\
    & \quad \cm, \nonce, \pi \bigr)=1\\
    & \quad \wedge \ (\cm,\nonce) \notin \mathcal Q
}{
    & \ro \sample \roSet\\
    & (\pcrs,\vcrs) \gets \setup \left( 1^\lambda \right)\\
    & \left( \sk, \pk \right) \gets \KG\left( \pcrs \right)\\
    & \left( \cm,\nonce,\pi \right) \gets\adv^{\ro, \prv^\ro_\sk} \left( \pcrs,\pk \right)
} \leq \negl \]
where $\prv^\ro_\sk \coloneqq \prv^\ro \left( \pcrs, \sk, \cdot, \cdot \right)$ and $\mathcal Q$ is the set of pairs $(\Com(\pcrs,\vec{a}'),\nonce')$ over the queries $(\vec{a}',\nonce')$ that $\adv$ makes to $\prv^\ro_\sk$.
\end{definition}

\begin{definition}[Knowledge Soundness]
\label{def:ks}
There exists an efficient extractor $\ext$  such that for all PPT adversaries $\adv$ and  all $\lambda \in \mathbb{N}$, this quantity is negligible:
\[ \multilinePr{
    & \vfy^\ro \left(  \vcrs, \pk, \cm, \nonce, \pi \right)=1  \\
    & \quad \wedge \ \cm \neq \Com \left( \pcrs, \protect\vec{a} \right)
}{
    & \ro \sample \roSet; \ (\pcrs,\vcrs) \gets \setup \left( 1^\lambda \right)\\
    & \left( \sk, \pk \right) \gets \KG\left( \pcrs \right)\\
    & r \sample \bits^{\poly} \\
    & \left( \cm,\nonce, \pi \right) \gets \adv^\ro \left( \pcrs,\sk,\pk ; r \right)\\
    & \protect\vec{a} \gets \ext^{\adv,\ro} \left( \pcrs,\sk,\pk,\cm, \nonce, r \right)
} \]
where $r$ is the random tape of $\adv$. $\ext$ may run $\adv$ on inputs and random tapes of its choice and answer each random oracle query of $\adv$ with a value of its choice.
\end{definition}

The following definition informally states that an adversary can produce $n$ valid proofs (for $n$ different keys) only by running in time roughly  $n \cdot \Delta$,
where, intuitively, $\Delta$ is some non-vanishing fraction of the time required to generate one individual proof. 
\begin{definition}[Non-Amortizability]
    \label{def:non-amort}
    Let $\Delta$ be a  function. We say that an argument with registered prover   satisfies non-amortizability with overhead $\Delta$ over a distribution $\xwDist$ on $\mathcal M_\lambda$ if for all $n = \poly$ and for all $\adv$ such that $\mathsf{Time}(\adv)/ n(\lambda) =  o\left(\Delta(\lambda)\right)$ for all $\lambda\in \mathbb{N}$, this quantity is negligible:

   \[ \multilinePr{
    & \bigwedge_j \vfy^\ro \Big(  \vcrs, \pk_j,\\ & \ \ \ \Com(\pcrs,\protect\vec{a}), \nonce, \pi_j \Big)=1\!\!\!
}{
  \!\!  & \ro \sample \roSet;\    (\pcrs,\vcrs) \gets \setup \left( 1^\lambda \right)\\
    \!\!& \left( \sk_j, \pk_j \right) \gets \KG\left( \pcrs \right), \  j \in [n]; \ 
     \protect\vec{a} \sample \xwDist \\
    \!& \nonce, \left( \pi_j \right)^n_{j=1} \!\gets
     \adv^\ro\!\! \left( \pcrs,\left( \sk_j,\pk_j \right)^n_{j=1}, \protect\vec{a}\right)
} \]

\end{definition}

\subsection{The TRACE Construction as an Argument with Registered Prover}
\label{sec:ipa-construction}

The claim procedure of \cref{alg:both} is a variant of that of TRACE~\cite{trace}.
We recast it here as an argument with registered prover, $\Pi_{\mathsf{IPA}}$.
We instantiate it with a Pedersen vector commitment under a structured key and
with the publicly verifiable inner-product argument of Daza, R\`afols and
Zacharakis~\cite{drz}, recalled in \cref{fig:pv-ipa}, rather than with the
transparent argument of Bulletproofs (\cref{fig:bp-ipa}).
The inner-product argument is used as a black box.
The final construction also signs each proof with a signature key registered
alongside $\pk$, which is what gives designated proving.
The protocol is prior work, in a slightly modified form\footnote{The original version of TRACE  performs an inner product (roughly) of the form $\langle E, s\vec{1}\rangle$ for a secret $s$. For this to work, we would need an explicit zero-knowledge guarantee(and potentially simulation-extractability, a stronger property)  on the inner product argument. Instead we can obtain the same effect without zero-knowledge via a signature and by making sure that the random oracle challenges use the public key of the node.}.
What the recasting supplies is \cref{def:non-amort} and the bound of
\cref{thm:ipa-nonamort}, neither of which the original states.

\paragraph{Leveraging Inner-Product Arguments for attention.}
Write $N=2^r$ for the padded trace length and $[\mathbf r]_1,[\mathbf s]_1\in\mathbb G_1^N$
for the two structured keys of \cref{fig:pv-ipa}.
Messages are traces $E\in\mathcal M_\lambda=\mathbb F_p^N$, and
$\Com(\pcrs,E)=[E^{\top}\mathbf r]_1$ is their Pedersen commitment $C_E$, a
consensus output of the state transition function rather than a value the
claimant chooses.
The nonce is the block hash.
We run the inner-product argument on the trace and the all-ones vector, whose
commitment $[S]_1=[\mathbf 1^{\top}\mathbf s]_1$ is published in the reference
string.
The public key is the verification key of a signature scheme.
A proof carries the sum $v=\langle E,\mathbf 1\rangle$, an inner-product proof,
and a signature on both, which guarantees designated proving.
The inner-product argument certifies the relation
\[
    \rel_{\mathsf{claim}}
    =\left\{
      \bigl((\cm,v),\,E\bigr)
      \;\middle|\;
      \cm=\Com(\pcrs,E),\;
      v=\langle E,\mathbf 1\rangle
    \right\}.
\]

The registered key enters the inner-product argument in exactly one place: the
seed $\theta$ of the Fiat--Shamir oracle handed to $\mathsf{P}$, which hashes
$\pk$ together with the commitment, $v$, and the nonce.
This is essential.
It is the only input that makes one identity's inner-product proof differ from
another's, and without it an operator could compute a single proof and sign it
under every identity it controls.
For the same reason the eligibility test of \cref{alg:both} hashes the
inner-product proof and not the signature, so that a randomized signature
cannot be resampled until it wins.
The main reason why a Bulletproofs-style inner-product argument is useful here is its natural non-amortizability.
\Cref{app:construction} gives the construction, with proof sketches of
completeness, binding, knowledge soundness, succinctness, and designated
proving.

\begin{theorem}
\label{thm:construction-props}
Let $(\mathsf{Gen},\mathsf P,\mathsf V)$ be a labelled inner-product argument
whose key generator satisfies Find-Rep (\cref{def:findrep}) and which is
perfectly complete, black-box knowledge sound, and succinct, with proofs and
verification logarithmic in $N$ (\cref{def:ip-props}).
Let $\Sigma$ be a correct signature scheme that is existentially unforgeable
under chosen-message attack (\cref{def:sig-props}).
Then $\Pi_{\mathsf{IPA}}$ (\cref{fig:construction}) is an argument with
registered provers that satisfies completeness, binding, knowledge soundness,
succinctness, and designated proving
(\cref{def:completeness,def:binding,def:ks,def:succinctness,def:designated-proving}).
\end{theorem}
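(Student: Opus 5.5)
We prove the five properties separately, each by a direct reduction to one assumed property of the labelled inner-product argument $(\mathsf{Gen},\mathsf P,\mathsf V)$ or of the signature scheme $\Sigma$. The construction of \cref{fig:construction} is taken as follows. $\setup$ runs $\mathsf{Gen}$ and publishes $[S]_1$. $\KG$ runs the key generation of $\Sigma$. $\prv$ computes $v=\langle E,\mathbf 1\rangle$ and the seed $\theta=\ro(\cm,\pk,v,\nonce)$, runs $\mathsf P$ labelled by $\theta$, and signs $(\cm,\nonce,v,\pi_{\mathsf{IP}})$. $\vfy$ recomputes $\theta$, runs $\mathsf V$, and checks the signature.

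\textbf{Completeness and succinctness.} Completeness is immediate. Perfect completeness of the inner-product argument holds for every label $\theta$, and correctness of $\Sigma$ holds on every message, so the conjunction of the two checks accepts with probability $1$. For succinctness we add the sizes. The proof consists of $v$ (one field element), the inner-product proof ($O(\log N)$ group elements, by \cref{def:ip-props}), and a signature of size $\poly$, for a total of $\poly\cdot\polylog(N)$. The commitment is a single group element. Verification costs one oracle call on an input of length $|\nonce|+\poly$, the logarithmic-time $\mathsf V$, and one signature verification. The signature is over a message of length $\poly\cdot(|\nonce|+\polylog N)$, which gives the stated bound.

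\textbf{Binding.} We reduce to Find-Rep (\cref{def:findrep}). Given $E\neq E'$ with $[E^\top\mathbf r]_1=[E'^\top\mathbf r]_1$, the nonzero vector $E-E'$ is a nontrivial representation of the identity in the basis $[\mathbf r]_1$. This is exactly what Find-Rep forbids. The only point to check is that the adversary against Find-Rep can simulate $\pcrs$ from the key it receives, since $[S]_1=[\mathbf 1^\top\mathbf s]_1$ is computable from that key.

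\textbf{Designated proving.} We reduce to EUF-CMA of $\Sigma$. The reduction answers each $\prv_\sk$ query itself: it computes the inner-product proof honestly, since no secret is involved, and asks the signing oracle for the signature. Suppose the adversary outputs an accepting $(\cm,\nonce,\pi)$ with $(\cm,\nonce)\notin\mathcal Q$. The signed message contains $(\cm,\nonce)$, so it differs from every queried message, and the signature is therefore a forgery.

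\textbf{Knowledge soundness (main obstacle).} The extractor runs the black-box extractor $\ext_{\mathsf{IP}}$ of the inner-product argument against a wrapped adversary. The wrapper runs $\adv$ and outputs $((\cm,[S]_1,v),\pi_{\mathsf{IP}})$ together with the label $\theta$. This argument has three delicate points. First, $\ext_{\mathsf{IP}}$ must be allowed to program the oracle answers that arise inside the Fiat--Shamir transcript while $\ext$ consistently answers the query that defines $\theta$. We handle this by forwarding all oracle queries through $\ext$, which is permitted by \cref{def:ks}, and by letting the labelled oracle be $\ro(\theta,\cdot)$, so that the domains are disjoint. Second, the extracted witness is a pair $(E,\mathbf b)$ with $\cm=[E^\top\mathbf r]_1$ and $[S]_1=[\mathbf b^\top\mathbf s]_1$. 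Here we use Find-Rep a second time to argue that $\mathbf b=\mathbf 1$ except with negligible probability, since otherwise $\mathbf b-\mathbf 1$ is a nontrivial representation. We also need $v=\langle E,\mathbf 1\rangle$, although this is not required by \cref{def:ks}. Third, extraction must succeed whenever $\vfy$ accepts, up to a negligible loss. Acceptance implies that $\mathsf V$ accepts, so the failure probability of $\ext$ is bounded by the knowledge error of the inner-product argument plus the Find-Rep advantage. The signature check plays no role here, and $\sk$ is simply passed through to the wrapper. Getting the oracle-programming bookkeeping right is the step we expect to require the most care.
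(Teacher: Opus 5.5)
Your proposal is correct and follows the paper's sketches property by property: completeness from perfect completeness and signature correctness, binding via Find-Rep applied to $E-E'$, succinctness by adding sizes, designated proving by an EUF-CMA reduction that computes $v$ and $\pi_{\mathsf{ip}}$ itself and obtains only $\sigma$ from the signing oracle, and knowledge soundness by running the inner-product extractor, with the signature playing no role since the adversary holds $\sk$. Your extra Find-Rep step forcing $\mathbf b=\mathbf 1$ is unnecessary, because \cref{def:ks} only asks that the extracted $E$ open $\cm$, which the first witness component already guarantees; that step would also need the assumption for the key $\mathbf s$, whereas \cref{def:findrep} is stated for $\mathbf r$.
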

\noindent
Proof sketches are in \cref{app:construction-properties}.
The argument of \cref{fig:pv-ipa} meets these requirements~\cite[Thm.~4]{drz}.

\paragraph{Non-amortizability.} In order to analyze non-amortizability, we need to provide a  new analysis (this type of property was never studied in proof systems to the best of our knowledge). 
A different succinct argument for the same relation may well admit
a prover that amortizes across keys.
\Cref{thm:ipa-nonamort} is accordingly proved for this instantiation, by
counting group operations round by round in the generic bilinear group model.
We carry this analysis out in \Cref{sec:crypto-analysis}.
One consequence is visible already here: the first round of the argument
depends only on the trace and the reference string, so every identity of an
operator shares it, and the bound comes entirely from the rounds that follow the
first challenge, which is the first point where the identity's key enters.

\begin{theorem}[Non-amortizability of $\Pi_{\mathsf{IPA}}$]
\label{thm:ipa-nonamort}
Let $N=2^r$ and $\log p$ be polynomially bounded in $\lambda$ with
$\log p=\omega(\log\lambda)$, and set
\[
    \Delta(\lambda)=\frac{(r-1)\log p}{\log\lambda}.
\]
Then, in the generic bilinear group model with a random oracle,
$\Pi_{\mathsf{IPA}}$ satisfies non-amortizability with overhead $\Delta$ (\cref{def:non-amort}), over every distribution $\xwDist$.
\end{theorem}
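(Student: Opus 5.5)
The plan is to prove \cref{thm:ipa-nonamort} by a counting (compression) argument in the generic bilinear group model: an adversary running $T$ group operations can produce only a limited number of distinct group elements, while $n$ accepting proofs force it to produce $n(r-1)$ group elements that are each pinned down by an independent uniform scalar.

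\emph{Step 1 (what generic group elements look like).} In the generic bilinear group model, every $\mathbb G_1$ element the adversary outputs has a known representation as an $\mathbb F_p$-linear combination of the reference-string elements $[\mathbf r]_1,[\mathbf s]_1$ and the other handles it received. The random oracle is lazily sampled.

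\emph{Step 2 (which elements a valid proof forces).} Fix an identity $j$. Its Fiat--Shamir challenges $x^{(j)}_1,\dots,x^{(j)}_{r}$ are oracle outputs on inputs containing $\pk_j$ and the earlier round messages. They are therefore uniform and independent across identities, and independent of everything queried before. For an accepting $\pi_j$ the verification equations of \cref{fig:pv-ipa} hold. Because the structured keys are formal indeterminates in the generic model, a Schwartz--Zippel argument over these indeterminates (the same one behind \cref{def:findrep} and the knowledge soundness of \cite{drz}) shows that, except with negligible probability, the representation of each round-$k$ message $L^{(j)}_k,R^{(j)}_k$ for $k\ge 2$ must equal the honest one. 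Binding (\cref{thm:construction-props}) fixes the folded witness to be $\vec a$, so the honest representation is determined. That honest representation is a fixed combination of the keys whose coefficients are polynomials in $x^{(j)}_1,\dots,x^{(j)}_{k-1}$ of positive degree in $x^{(j)}_{k-1}$. I would isolate one scalar per such round, for example the coefficient of a chosen basis key, that is a nonconstant function of the latest challenge, hence uniform given everything the adversary fixed before querying it. The first round depends only on $\vec a$ and the CRS, so it is shared and excluded; this gives the factor $r-1$.

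\emph{Step 3 (counting).} A computation with $T$ group operations on at most $m=\poly$ input handles is described by a DAG. I would count the distinct linear forms such DAGs can reach, with the scalars chosen adaptively. Interleaving with oracle queries is handled by guessing, for each of the $\le T$ queries, which previously computed element it depends on, at cost $\poly^{T}$. The number of possible transcripts is then at most $(m+T)^{O(T)}=2^{O(T\log\lambda)}$. Each of the $n(r-1)$ required scalars is uniform in $\mathbb F_p$ when its challenge is sampled, and we need a union bound over the transcripts and over which computed elements are claimed to match the required ones. The probability that some transcript hits all of them is at most $2^{O(T\log\lambda)}\cdot p^{-n(r-1)}+\negl$. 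If $T/n=o\bigl((r-1)\log p/\log\lambda\bigr)$, the exponent is $n(r-1)\log p\,(o(1)-1)$, and this is negligible since $\log p=\omega(\log\lambda)$ and $n\ge 1$. Hence $\mathsf{Time}(\adv)/n=o(\Delta)$ forces negligible success, for every $\xwDist$, since $\vec a$ is handed to the adversary anyway.

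\emph{Main obstacle.} The hardest step is making the per-round independence rigorous under adaptivity. The challenge $x^{(j)}_k$ is determined only after the adversary commits to round-$k$ messages, and the adversary may reuse partial computations, such as multiexponentiations of the CRS by $\vec a$, across identities and rounds. My first attempt is to condition sequentially: order all oracle queries, and at each fresh challenge argue that the newly required scalar is uniform given the current DAG. The counting must then charge it only to operations performed after that query. If this sequential charging does not give a clean bound, I would fall back to the global transcript-counting bound above, which tolerates arbitrary reuse because it counts complete computations rather than per-round work.
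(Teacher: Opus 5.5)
Your plan is essentially the paper's proof: a symbolic simulation in the generic bilinear group, acceptance forcing every cross term after the first round to be a fixed nonconstant Laurent polynomial in the preceding challenge, and one global count of straight-line descriptions (your fallback, which the paper uses instead of per-round charging), yielding $Q=\Omega(n(r-1)\log p/\log\lambda)$. When you write it out, handle three points as the paper does: the adversary can grind over nonces, so the challenges are selected rather than uniform and the union bound must also range over which of the $q_H$ oracle answers play the $n(r-1)$ challenges (this adds $O(n(r-1)\log\lambda)$ to the exponent, absorbed since $\log p=\omega(\log\lambda)$); each target has at most two preimages rather than being uniform; and for an arbitrary $\xwDist$ (e.g.\ a zero trace) nonconstancy comes from the folded all-ones vector on the $\beta$ side.
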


We describe a PRF-based variant with stronger non-amortizability guarantees in \cref{sec:prf-cons}. It makes the
per-identity secret vector fresh by construction, but requires a general-purpose
zero-knowledge proof and has a more expensive prover.

\begin{table}[t]
\centering
\caption{Comparing Proof of Diligence (PoD)~\cite{pod} and TRACE~\cite{trace}. $N$ is the padded trace length and $\epsilon \in (0,1)$ is arbitrary.}
\label{tab:properties}
{\setlength{\tabcolsep}{3pt}
\begin{tabular}{@{}lcccc@{}}
\toprule
 & Succinct
 & \shortstack{Designated proving}
 & \shortstack{Knowledge of trace}
 & \shortstack{Non-amortizability} \\
\midrule
PoD &  yes & yes & no & $\Delta=O(1)$ \\
TRACE &  yes & yes & yes
  & $\Delta = \Omega \left( \log N \cdot \lambda^{1-\epsilon} \right)$ \\
\bottomrule
\end{tabular}
}
\end{table}

\section{The Game}
\label{sec:game}
In this section, node operators are referred to as players. Let $g_v,g_a:\mathbb{N}\rightarrow \mathbb{R}$ denote total reward functions of the overall number of validator and attention sybils, respectively, which are given to validators and attention players. We assume that $g_v,g_a$ are both concave, twice differentiable and satisfy $g_v(0)=g_a(0) = 0$. 

$c$ denotes the entry cost per player, $d_v$ the cost per sybil in the validation game and $d_a$ the cost per sybil in the attention game. We assume that $d_v>d_a$. This introduces (mathematical) asymmetry between these two games. 

The strategy set of each player is to choose which game it joins and how many sybils it deploys in each game. 

Player $i$'s utility from deploying $k_v^i$ sybils in the validation game and $k_a^i$ sybils in the attention game, when there are a total of $K_v$ sybils deployed by all players in the validation game and $K_a$ sybils deployed by all players in the attention game is defined as: 

\begin{equation}\label{utility_validation}
U^i(k_v^i,k_a^i) = \frac{k_v^i}{K_v}g_v(K_v)-d_vk_v^i+\frac{k_a^i}{K_a}g_a(K_a)-d_ak_a^i - c\mathbbm{1}_{\{k_v^i > 0 \text { or } k_a^i >0\}}.    
\end{equation}

For a game $j\in \{validation, attention\},$ we define the average reward per sybil in game $j$ as 

\begin{equation*}
    r_j(n):=\frac{g_j(n)}{n}, n\geq 1.
\end{equation*}
Concavity of $g_j$ guarantees that $r_j(n)$ is non-increasing. 

We focus on a specific type of equilibrium of this (meta) game. For given parameters $V$ and $A$ such that $V<A$, $A$ players participate in the attention and $V$ \emph{of those} $A$ players also participate in the validation game. The remaining $A-V$ players participate in the attention game only. All other potential players do not participate in any game, i.e., they stay inactive.

If a player deploys $k$ sybils while others deploy $m$ in game $j$, its payoff in that game ignoring the entry cost is: 

\begin{equation}\label{payoff_fixed_game}
    \frac{k}{k+m}g_j(m+k)-d_jk=k(r_j(m+k)-d_j).
\end{equation}

Suppose we target $s$ sybils per player in game $j$, when there are $n$ active players in game $j$. Define the payoff of a single active player who instead deploys $q$, while all $n-1$ others deploy $s$:

\begin{equation}\label{deviation_to_q}
    D_j(q|n,s):=q(r_j((n-1)s+q)-d_j).
\end{equation}
For a new player joining this game with $q$ sybils the payoff is defined as:

\begin{equation}\label{new_entrant}
    E_j(q|n,s) := q(r_j(ns+q)-d_j).     
\end{equation}
In the equilibrium with $s_v$ validator sybils per player, best-response guarantees: 

\begin{equation}\label{eq_condition_val1}
    D_v(s_v|V,s_v) \geq D_v(q|V,s_v)\ ,\ \forall q\neq s
\end{equation}
The other equilibrium condition is a no-entry condition for inactive players: 

\begin{equation}\label{no_entry_validation}
    \sup_{q\geq 1}E_v(q|V,s_v) \leq 0.
\end{equation}
The best-response, individual rationality and no-entry conditions for the attention game are: 

\begin{align}
    D_a(s_a|A,s_a) &\geq D_a(q|A,s_a), \qquad \forall q\neq s_a
    \label{best_response_attention}\\
    D_a(s_a|A,s_a) &\geq c
    \label{individual_rationality_attention}\\
    \sup_{q\geq 0} E_a(q|A,s_a) &\leq c
    \label{no_entry_attention}
\end{align}

A particular outcome of interest is when each player deploys at most a single sybil in both attention and validation games.  
We now study the conditions under which this outcome is attained.

\paragraph{Validation game.}

Since the average reward function is non-increasing in the number of sybils, the validation condition reduces in this case to 

\begin{equation}
\label{eq:validation_game_condition}
    \frac{g_v(V)}{V}\geq d_v \geq \frac{g_v(V+1)}{V+1}.
\end{equation}
Indeed, this ``single-crossing" condition handles all deviations from the equilibrium:
the active validator earns $\frac{g_v(V)}{V}-d_v\geq0$, satisfying individual rationality. 
A non-validator deviating to deploying $q\geq 1$ sybils earns $q(\frac{g_v(V+q)}{V+q}-d_v) \leq 0$, where the inequality is implied from the decreasing average rewards established above. Similarly, an active validator is worse off deploying $q\geq 2$ sybils instead of one. 

\paragraph{Attention game.}
We now analyze the attention game equilibrium conditions in this setting. Let $p_a:=\frac{g_a(A)}{A}-d_a$ denote the attention game profit per sybil (ignoring entry cost). The one sybil conditions are: 

\begin{equation*}\label{one_Sybil_attention}
    p_a\geq c, \quad \ 
q(\frac{g_a(A-1+q)}{A-1+q}-d_a)\leq p_a \ \text{for } q\geq 2
\quad\     q(\frac{g_a(A+q)}{A+q}-d_a)\leq c,\ \text{for } q\geq 2
\end{equation*}

Next, we identify the precise condition under which a one-sybil per player equilibrium exists (we call this a ``good equilibrium''):
\begin{equation}\label{condition_for_good_equilibrium}
c\leq \frac{2d_a}{A-1}.
\end{equation}
The proof of the following result is deferred to Appendix~\ref{app:prop_necessity_of_condition}.

\begin{proposition}\label{prop:necessity_of_condition}
    Condition~\eqref{condition_for_good_equilibrium} is necessary for a good equilibrium to exist for any non-decreasing reward function $g_a$. 
\end{proposition}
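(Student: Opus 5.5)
We prove necessity by combining two of the one-sybil conditions of the attention game: individual rationality, and the no-profitable-deviation condition of an incumbent who switches from one sybil to $q=2$. A no-entry constraint is not needed. Suppose a good equilibrium exists, so every one of the $A$ attention players deploys exactly one sybil. Write $r_a(n)=g_a(n)/n$ and $p_a=r_a(A)-d_a$.

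First, individual rationality gives $p_a\ge c$, that is,
\[
  r_a(A)-d_a\ge c .
\]
Second, the incumbent's deviation to $q=2$ must not be profitable:
\[
  2\bigl(r_a(A+1)-d_a\bigr)\le r_a(A)-d_a .
\]
Rearranging, this is equivalent to
\[
  2r_a(A+1)-r_a(A)\le d_a .
\]

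Now we use that $g_a$ is non-decreasing, so $g_a(A+1)\ge g_a(A)$. Hence
\[
  r_a(A+1)=\frac{g_a(A+1)}{A+1}\ge \frac{g_a(A)}{A+1}=\frac{A}{A+1}\,r_a(A).
\]
Substituting this into the deviation inequality gives
\[
  \Bigl(\frac{2A}{A+1}-1\Bigr)r_a(A)\le d_a ,
\]
which simplifies to
\[
  r_a(A)\le \frac{(A+1)d_a}{A-1}
\]
for $A\ge 2$.

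Combining with individual rationality,
\[
  c\le r_a(A)-d_a\le d_a\Bigl(\frac{A+1}{A-1}-1\Bigr)=\frac{2d_a}{A-1}.
\]
This is exactly condition~\eqref{condition_for_good_equilibrium}. The degenerate case $A=1$ makes the right-hand side infinite, so the condition holds trivially there.

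The only delicate step is choosing which deviation to use. The $q=2$ incumbent deviation is the right one, and monotonicity of $g_a$ is what lets us lower-bound $r_a(A+1)$ in terms of $r_a(A)$. Concavity alone would only give that $r_a$ is non-increasing, which points the wrong way for this bound. We also check that dividing by $A-1>0$ is legitimate, which it is since $A\ge 2$.
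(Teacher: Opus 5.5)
Your proof is correct and follows the paper's argument: both combine individual rationality $p_a \ge c$ with the incumbent's deviation from one to two sybils, and both use monotonicity $g_a(A+1)\ge g_a(A)$. The paper phrases the latter as $\delta := g_a(A+1)-g_a(A)\ge 0$ in the inequality $p_a(A-1)+2\delta\le 2d_a$, and your non-strict conclusion $c \le 2d_a/(A-1)$ is the correct form, since individual rationality only gives $p_a \ge c$.
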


For the sufficiency, consider the simplest reward function, namely fixed pools of rewards:  $g_j(0)=0, g_j(n)=R_j$ for a game $j$, and any $n>0$.

In this case, the equilibrium condition for the validation game translates into 
\begin{equation}
\label{interval_for_validation_reward}
    d_v V \leq R_v \leq d_v(V+1).
\end{equation} 

The condition for the attention game is: 

\begin{equation}\label{interval_for_attention_reward}
    A(d_a+c) \leq R_a \leq \min\{ \frac{d_a A(A+1)}{A-1}, \min_{q\geq 1}(A+q)(d_a+\frac{c}{q})\}
\end{equation}

Note that $(A+q)(d_a+\frac{c}{q})-A(d_a+c)=qd_a+\frac{Ac}{q}-c(A-1)$. For $q=1$, this difference equals to $d_a+c>0$. For $q\geq 2$, $qd_a\geq 2d_a\geq c(A-1)$, exactly the condition from the proposition claim, i.e., $A(d_a+c)<\min_{q\geq 1}(A+q)(d_a+\frac{c}{q})$.

For any $A$ which satisfies Condition~\eqref{condition_for_good_equilibrium}
and for any $V \leq A$ we can find appropriate constant reward functions $R_V, R_A$ to satisfy the equilibrium conditions~\eqref{interval_for_validation_reward} and \eqref{interval_for_attention_reward}.

The same argument extends to $s_a>1$. If $M$ players each deploy $s_a$ sybils,
so that $A=M s_a$, the necessary condition has the form
\begin{equation*}
  c < \frac{d_a s_a(s_a+1)}{A-s_a}.
\end{equation*}
The full statement and proof are in \cref{app:multi-sybil}.

\paragraph{Non-monotonic attention reward function}
The discussion above shows that whenever Condition~\eqref{condition_for_good_equilibrium} imposes an upper bound on $A$. This is too restrictive for the values $d_A$ and $c$ in practice.
If we allow reward functions that may decrease from some point then the equilibrium conditions can be relaxed.

For example, choose $p_a>c$, $\beta > \max\{\frac{p_a}{2},p_a-c\}$ and a target amount of attention nodes $A$, and consider the smooth concave function
\begin{equation}
\label{eq:non_monotonic_reward}
g_a(x)=(d_a+p_a+\beta A)x-\beta x^2.
\end{equation}
Then, at point $A$, we have: 

\begin{equation}
    \frac{g_a(A)}{A} - d_a=p_a>c.
\end{equation}

If an active player uses $q$ sybils while all others use one sybil each, its payoff is $q(p_a-\beta(q-1))$, which is strictly below $p_a$ for any $q\geq 2$.

An inactive player entering the participation games with $q$ sybils obtains $q(p_a-\beta q)$, which is below $c$ for every $q\geq 1$. Therefore, the good equilibrium conditions are met, $A$ players deploy exactly one sybil, for an arbitrary $c$. The only ``problem" here is that $g_a(x)$ becomes negative at some point, but we can truncate it.~\footnote{As is suggested by the following Ethereum improvement proposal (EIP): https://eips.ethereum.org/EIPS/eip-8363. The truncated version looks qualitatively exactly the same as the EIP-8363.} 

Decreasing rewards support the desired equilibrium even when $d_a$ is small.
  Section~\ref{sec:tradeoff} shows how $d_a$ controls disruption costs and the
  headroom before collapse.

\section{Worked Example: BoLD and TRACE}
\label{sec:example}

In this section we instantiate the game of \cref{sec:game} with the costs an
operator faces on Arbitrum One, and report how many attention nodes the design
sustains.
Assertions are posted once an hour, so we take one hour as the period and
express every cost per hour.

Posting or challenging an assertion requires stake of about \$9.8M, so at an
annual rate of 5\% the hourly opportunity cost of validation is
\begin{equation}
\label{eq:example_d_v}
  d_v \;\approx\; 9.8 \times 10^{6} \cdot \frac{0.05}{8760} \;=\; \$56,
\end{equation}
where $8760$ is the number of hours in a year.
A managed provider charges about a dollar an hour for an Arbitrum full
node~\cite{chainstack}, so $c = \$1.00$ per hour.
An identity proves on every block, Arbitrum produces one every $0.25$ seconds,
and one TRACE proof costs about $4N$ points of multi-scalar multiplication at
trace length $N$.
At $N = 2^{20}$ that is $2.9$ machine-hours of proving per hour, which priced
against cloud hardware gives $d_a = \$0.91$ per hour.
\Cref{app:costing} carries the hardware, the benchmarks, and the arithmetic.

\subsection{Execution Diversity for Non-Decreasing Reward Functions.}
Condition~\eqref{condition_for_good_equilibrium} rearranges to
\begin{equation}
\label{eq:bound}
  A \;\leq\; 1 + \frac{2 d_a}{c},
\end{equation}
where $A$ is the number of attention players and the bound is the largest
population for which one sybil per player is an equilibrium, when the attention
reward function is non-decreasing.
Substituting gives $A \leq 2$ at $N = 2^{20}$.
Trace length is the only parameter of the mechanism that moves $d_a$, and
\cref{app:costing} sweeps it together with what the resulting proving costs.
Lowering the rate at which claims win does not raise the bound, because every
identity proves on every block whether or not it wins.
Under a non-decreasing reward an operator gains from extra identities, so the
only way to stop it is to make each identity expensive.

\subsection{Execution Diversity with Non-Monotonic Attention Reward Functions}
\label{subsec:non_monotonic}

As explained above, relaxing the monotonicity restriction evades Condition~\ref{condition_for_good_equilibrium}.

For example,
suppose the rollup operator is content with $A = 30$ attention node operators, out of which $V=5$ are also validators.
In this case, using~\ref{eq:example_d_v} we can set the constant hourly validator reward function
\[
R_v = 5.5 \cdot d_v \approx \$308
\]
and this would satisfy the equilibrium condition for validators~\ref{interval_for_validation_reward}.

As for the attention reward function, we can choose any parameter $p_a > c$ such as $p_a = 2.09$, a parameter $\beta > \max\{\frac{p_a}{2},p_a-c\}$ such as $\beta = 2$.
Plugging into~\ref{eq:non_monotonic_reward},we end up with the attention reward function
\begin{equation*}
    \label{eq:non_monotonic_reward_ex}
    g_a(x) = x(63 - 2x)
\end{equation*}

The results above show that under these reward functions, $A=30$ attention nodes out of which $V=5$ are validators, where each operator deploys a single sybil in each role they play, is indeed an equilibrium outcome.

\subsection{Tradeoff between the reward curves}
\label{sec:tradeoff}
The conditions above ask whether a single operator wants to
participate in attention, and establish an equilibrium. It is not the only equilibrium of the game.
Consider a pool of $\$90$ and a target of $30$ identities. With one identity per
operator, each nets $1.09$. But fifteen operators with two identities each net
$3.18$. Dropping one lowers that payoff to $3.09$, while adding one or entering
as a new operator loses money. The remaining checks are in
\cref{app:reward-tradeoff}.

We now consider two further questions: i) how the population reaches an outcome,
and ii) what an attacker can do to an established one. The answers depend on
$\beta$, the slope of the decreasing reward, and on $d_a$.
Dropping an identity raises every remaining identity's payment by $\beta$.
Adding a second identity is unattractive when $\beta>p/2$, while dropping from
two identities is unattractive when $\beta<p$. More generally, any $\beta>p$
makes operators with more than one identity unstable. Here $\beta=2$ and
$p=2.09$, so the target is stable to these deviations.

The reward pool reaches zero at $A+(d_a+p)/\beta$, so a higher $\beta$ brings
collapse closer to $A$. A constant reward limits how many operators the mechanism
pays, while a decreasing one limits how far identities can overshoot $A$.
The bound on that overshoot and its derivation are in
\cref{app:reward-tradeoff}. In both cases $d_a$ is the cost of an identity, so a
small $d_a$ lowers the cost of crossing the target.

\section{A Stronger Attention Construction and Its Limitations}
\label{sec:variant-limitation}

The cost in \cref{sec:example} is the work of an honest prover, not a lower
bound for the main construction.
\Cref{sec:prf-cons} gives a stronger non-amortizability guarantee. It has a more
expensive prover, and it is not the construction priced in
\cref{sec:example}.
For the main IPA construction, \cref{thm:ipa-nonamort} gives only a logarithmic
overhead in the trace length. The sharing construction in
\cref{app:amortization} reduces the cost to $N^{0.936}$ per identity, where
$N$ is the padded trace length.
The gap comes from the vector the main construction folds the trace against.
It folds the trace against the all-ones vector, which is the same for every
identity, so the vector side of each IPA round adds no per-identity work.

Deriving $\mathbf b$ from the registered key removes that shared computation.
But key derivation alone does not restore a recurring cost proportional to the
trace. \Cref{app:keyed} describes this preprocessing variant.
Because $\mathbf b$ is fixed at registration, an operator can precompute the
part of each round that does not depend on the block. The remaining secret-side
work is $\tilde O(N^{2/3})$ per block.
The work is not removed; it moves to registration. Each identity pays its own
one-time $\tilde O(N^{4/3})$ charge, which an operator cannot share across
identities.
To express that one-time cost as an hourly cost, we would need to assume how
long an identity remains registered and measure the registration proof. We do
neither here.
It remains open whether making the folding order depend on the block would
restore a recurring per-block cost.

The PRF construction in \cref{sec:prf-cons} takes a different approach: it
derives $\mathbf b$ from the block context and proves the derivation in each
claim. This is the construction that gives the stronger non-amortizability
guarantee.

None of this changes \cref{sec:example}. That section prices the work of an
honest prover on every block, measured on hardware rather than derived from a
lower bound.
 
The conditions above show that the target population is an
equilibrium. They do not show that it is unique or that entry and exit will
reach it.
\Cref{sec:tradeoff} gives a second equilibrium with the same population
numbers.
The model assumes that operators act alone. It does not analyze collusion.
The baseline costs in \cref{sec:example} use Arbitrum One at one trace length.
We measure multi-scalar multiplication on one machine and extrapolate the two
largest trace lengths.
The bound in \cref{eq:bound} applies whenever an operator shares one node cost
across its identities and each identity has its own additional cost.
If a mechanism charges an identity only a deposit, then $d_a$ is fixed by that
deposit. The bound therefore does not move with parameters that leave the
deposit unchanged.

\section{Conclusion}
\label{sec:conclusion}

We studied how an attention reward translates into execution diversity when one
operator can register many identities. Modeling validation and attention as
coupled contests shows that, under a non-decreasing reward, one identity per
operator requires $A \leq 1+2d_a/c$, so the per-identity cost $d_a$ is the lever
a deployer controls.
Arguments with registered provers and their non-amortizability make that cost
precise.
Proof of Diligence has $\Delta=O(1)$ and certifies trace roots rather than the
trace, while our variant of TRACE is knowledge sound and has an overhead
logarithmic in the trace length.
Measured costs on Arbitrum One support $A\leq2$ at trace length $2^{20}$. A
decreasing reward can support a larger target without raising $d_a$, but it
moves the constraint to the reward curve.

For TRACE, the proven overhead is logarithmic in the trace length, the best
sharing we know of costs $N^{0.936}$ per identity, and the honest prover pays
$\Theta(N)$.
Closing this gap requires lower bounds for multi-scalar multiplication with
structured exponents and preprocessing.
The PRF construction of \cref{sec:prf-cons} makes the overhead linear, at the
price of a more expensive prover and an analysis that models the PRF
as a random oracle.
More broadly, non-amortizability runs opposite to the batching and aggregation
that most work on succinct arguments pursues, and it deserves a systematic
study: which proof systems are non-amortizable and with what overhead, whether
generic transformations can raise the overhead of an existing argument, and
whether such bounds hold outside idealized models such as the generic group
model.
The notion may also apply beyond attention, to any mechanism that pays or
rate-limits registered identities and needs each one to do its own work, such as
proof-of-stake committees whose members are expected to hold or check data
independently.

\bibliographystyle{alpha}
\newcommand{\etalchar}[1]{$^{#1}$}

\appendix
\etocdepthtag.toc{appendix}
\begingroup
\etocsettagdepth{mainmatter}{none}
\etocsettagdepth{appendix}{subsection}
\etocsettocdepth{subsection}
\etocsetlevel{title}{0}
\etocsetlevel{author}{0}
\def\authcount#1{}
\etocsettocstyle{\section*{Contents of the Appendix}}{}
\tableofcontents
\endgroup

\section{Non-Amortizability of the Inner-Product Arguments}
\label{sec:crypto-analysis}

\subsection{Preliminary definitions and properties for non-amortizability}

\begin{definition}[Hill Computational min-entropy]
Let $X = \{X_\lambda\}_{\lambda \in \mathbb{N}}$ be a distribution
ensemble over domains $\mathcal{X}_\lambda$.
We say that $X$ has at least $\kappa(\lambda)$ bits of computational
entropy if there exists a distribution ensemble
$Y = \{Y_\lambda\}_{\lambda \in \mathbb{N}}$ over the same domains such that
\[
    X \approx_c Y
    \qquad\text{and}\qquad
    H_\infty(Y_\lambda)
    := -\log_2 \max_{y \in \mathcal{X}_\lambda}
                  \Pr[Y_\lambda = y]
    \ge \kappa(\lambda).
\]
Here, $X \approx_c Y$ means that, for every PPT distinguisher $D$,
there exists a negligible function $\nu$ such that
\[
    \left|
        \Pr[D(1^\lambda,X_\lambda)=1]
        -
        \Pr[D(1^\lambda,Y_\lambda)=1]
    \right|
    \le \nu(\lambda).
\]
\end{definition}

\paragraph{Generic group model.}
Let $N=N(\lambda)$ and $p=p(\lambda)$ be efficiently computable,
where $N$ and $\log p$ are polynomially bounded, $p$ is prime,
and $\log p=\Omega(\lambda)$.
Let $G_\lambda$ be an additive cyclic group of order $p$,
and set $\ell=2\lceil\log_2 p\rceil$.
A generic encoding is a uniformly random injection
$\sigma:G_\lambda\to\{0,1\}^{\ell}$.
The oracle $\mathcal O_\sigma$, on previously issued handles
$\alpha=\sigma(x)$ and $\beta=\sigma(y)$ and a sign
$s\in\{-1,1\}$, returns $\sigma(x+sy)$.
An invalid query returns $\bot$.
The adversary may compare handles and perform arbitrary local
computation, but its output must be an issued handle.
Each oracle query costs at least one time step, and all computation,
including preprocessing, is charged to the adversary's running time.
No precomputed group elements or auxiliary advice about the sampled
bases are supplied.

\begin{definition}[MSM experiment]
For a distribution ensemble
$\mathcal D=\{\mathcal D_\lambda\}_\lambda$ over $\mathbb F_p^N$,
the experiment $\mathsf{MSM}_{\mathcal A,\mathcal D}(\lambda)$
proceeds as follows:
\begin{enumerate}
    \item Sample $\vec v\leftarrow\mathcal D_\lambda$.
    Independently sample $g_1,\ldots,g_N$ uniformly from
    $G_\lambda\setminus\{0\}$ and sample a generic encoding $\sigma$.

    \item Initialize the issued handles with
    $\sigma(0),\sigma(g_1),\ldots,\sigma(g_N)$ and run
    \[
        z\leftarrow
        \mathcal A^{\mathcal O_\sigma}
        \bigl(1^\lambda,p,\vec v,
              \sigma(0),\sigma(g_1),\ldots,\sigma(g_N)\bigr).
    \]

    \item Output $1$ if $z$ is an issued handle and
    \[
        z=\sigma\!\left(\sum_{i=1}^N v_i g_i\right);
    \]
    otherwise output $0$.
\end{enumerate}
All sampling and the adversary's random coins are independent.
\end{definition}

\begin{theorem}[Entropy lower bound for MSM]
\label{thm:msm-entropy-lower-bound}
Suppose $V_\lambda\leftarrow\mathcal D_\lambda$ has at least
$\kappa(\lambda)$ bits of HILL computational min-entropy.
Then every PPT generic adversary $\mathcal A$ with worst-case
running time
\[
    T(\lambda)=o\!\left(\frac{\kappa(\lambda)}{\log\lambda}\right)
\]
satisfies
\[
    \Pr\!\left[
        \mathsf{MSM}_{\mathcal A,\mathcal D}(\lambda)=1
    \right]
    =\operatorname{negl}(\lambda).
\]
The probability is over the vector, bases, encoding,
and adversary's coins.
\end{theorem}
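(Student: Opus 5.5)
The plan is to argue in the generic group model that every handle a time-$T$ adversary can produce is a linear combination $\sum_i c_i g_i$ whose coefficient vector $\vec c$ is determined by the adversary's oracle queries. We then show that the target vector $\vec v$ is unlikely to lie in the small set of reachable coefficient vectors unless $\vec v$ is predictable. The argument has three steps.

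\textbf{Step 1 (symbolic tracking).} Replace the real oracle by a symbolic simulation. Each issued handle is labelled by a formal linear polynomial in indeterminates $X_1,\dots,X_N$. The initial handles carry labels $0,X_1,\dots,X_N$, and a query on labels $L,L'$ with sign $s$ returns a handle labelled $L+sL'$; two handles are equal exactly when their labels coincide as polynomials. By the standard Shoup/Schwartz--Zippel argument, the real and symbolic games diverge only if two distinct labels agree on the random $g_i$. For two distinct linear labels this happens with probability at most $1/(p-1)$, so a union bound over at most $(T+N+1)^2$ pairs costs $O((T+N)^2/p)=\operatorname{negl}(\lambda)$, using $\log p=\Omega(\lambda)$. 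Encodings of fresh elements are uniformly random strings, which the simulation can sample lazily. In the symbolic game the adversary wins only if the label of its output equals $\sum_i v_iX_i$. If the output is not an issued handle it loses outright; an unissued encoding hitting the correct element has probability $\le 2^{-\ell}\cdot$poly.

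\textbf{Step 2 (counting reachable labels).} In the symbolic game the adversary's view is independent of the group, so its sequence of query choices (which two previously issued handles, and which sign) is a function of its coins and $\vec v$ only. There are at most $T$ queries, each choosing among at most $(N+T+1)^2\cdot 2$ options, plus the choice of output. Hence a run is described by a transcript $\tau$ of length $O(T\log(N+T))=O(T\log\lambda)$ bits, because $N$ and $T$ are polynomial. The output label is a deterministic function of $\tau$. The adversary wins iff $\vec v=\mathrm{coef}(\tau)$, so the coefficient vector is a function of a string of $b=O(T\log\lambda)$ bits.

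\textbf{Step 3 (entropy contradiction).} Let $Y$ be the HILL witness, with $X\approx_c Y$ and $H_\infty(Y)\ge\kappa$. Build a PPT distinguisher $D$ that, on input $\vec u$, runs $\mathcal A$ in the symbolic game (efficient, since labels are vectors in $\mathbb F_p^N$) and outputs $1$ iff the output label equals $\vec u$. On $X=\vec v$, $D$ accepts with the adversary's symbolic success probability, which is within negligible distance of its real success probability. On $Y$, acceptance requires $Y=\mathrm{coef}(\tau)$, where $\tau$ is produced from $Y$ itself.

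This is the main obstacle: the transcript may depend on the input, so we cannot fix a candidate set in advance. I would handle it by guessing. Sampling a uniform $\tau'$ of $b$ bits matches the true transcript with probability $2^{-b}$. More cleanly, $\Pr[Y=\mathrm{coef}(\tau(Y))]\le\sum_{\tau}\Pr[Y=\mathrm{coef}(\tau)]\le 2^{b}\cdot 2^{-\kappa}$, averaging over the adversary's coins, since for fixed coins the map $y\mapsto\tau(y)$ has $y=\mathrm{coef}(\tau(y))$ only on a set with at most $2^b$ elements. With $T=o(\kappa/\log\lambda)$ we get $b=o(\kappa)$. Since $\kappa=\omega(\log\lambda)$ is implicit (otherwise $T$ would be below $1$), this gives $2^{b-\kappa}=\operatorname{negl}(\lambda)$. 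Indistinguishability then transfers the negligible bound to $X$, and adding the Step 1 loss completes the argument. The points to check carefully are that the transcript bit count really is $O(T\log\lambda)$, including comparisons (which are free but determined by the labels), and that $D$ is PPT.
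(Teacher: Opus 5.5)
Your proposal is correct and follows essentially the same route as the paper: a symbolic simulation coupled to the real game via the $1/(p-1)$ collision bound, a PPT symbolic-simulation distinguisher to pass from HILL to statistical min-entropy, and a count of at most $2^{O(T\log\lambda)}$ coefficient vectors reachable by short computations, weighed against $2^{-\kappa}$. One small bookkeeping fix is needed: your collision union bound should also include the target vector $\vec v$, as the paper's does. The reason is that a real win whose output has coefficient vector $\vec c\neq\vec v$ with $\sum_i c_i g_i=\sum_i v_i g_i$ is a divergence between the real and symbolic games that pairs of issued labels do not cover; accounting for it costs only another $O((T+N)/p)$.
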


\begin{proof}

At the high-level, the proof leverages a counting argument, showing that the ``short'' computations (those running in $o\!\left(\frac{\kappa(\lambda)}{\log\lambda}\right)$ steps) cannot possibly account for the entropy of the final output, except with negligible probability. The rest of the proof takes care of the technicalities related to generic groups and computational (as opposed to statistical) entropy.

Fix such an adversary. We take $T\ge1$ and set $L=N+T+1$.
Since $\mathcal A$ is PPT, both $T$ and $L$ are polynomially bounded.

\emph{Symbolic simulation.}
Simulate $\mathcal A$ by associating the initial handles with
the formal coefficient vectors
$0,e_1,\ldots,e_N\in\mathbb F_p^N$.
An oracle query on vectors $a,b$ with sign $s$ produces $a+sb$.
Equal coefficient vectors receive the same handle; distinct
vectors receive distinct uniformly random labels.
This simulation is polynomial time: it stores polynomially many
vectors of polynomial length.
Sampling fresh labels can be implemented in strict polynomial
time up to negligible statistical error, which we absorb below.
Let $C_{\mathcal A}(v)$ denote the output's formal coefficient
vector on input $v$, with an invalid output treated as failure.

Fix a generator $h$ of $G_\lambda$ and write $g_i=a_i h$ for
independent uniform $a_i\in\mathbb F_p^\times$.
For distinct coefficient vectors $c,d$, the probability that
\[
    \sum_i c_i g_i=\sum_i d_i g_i
\]
is at most $1/(p-1)$: after fixing all but one $a_i$ with
$c_i\ne d_i$, the equality determines at most one value for
the remaining $a_i$.
Conditioned on $v$ and the symbolic transcript, the formal
vectors are independent of these hidden scalars.
There are at most $L$ formal vectors in the simulation;
including $v$ as an additional target vector gives at most $L+1$.
Coupling the real and symbolic executions and taking a union
bound over pairs of distinct vectors therefore gives
\begin{align*}
    \Pr[\mathsf{MSM}_{\mathcal A,\mathcal D}(\lambda)=1]
    &\le
    \Pr[C_{\mathcal A}(V_\lambda)=V_\lambda]
    +O(L^2/p)+\operatorname{negl}(\lambda)\\
    &=
    \Pr[C_{\mathcal A}(V_\lambda)=V_\lambda]
    +\operatorname{negl}(\lambda).
\end{align*}
Here the coupling can fail only upon an accidental equality
between distinct formal vectors; a correct output with the
wrong formal coefficient vector is also one of the counted
equalities.

\emph{Transfer to statistical min-entropy.}
The test that symbolically simulates $\mathcal A$ on input $v$
and accepts exactly when $C_{\mathcal A}(v)=v$ is PPT.
Computational indistinguishability thus implies
\[
    \Pr[C_{\mathcal A}(V_\lambda)=V_\lambda]
    \le
    \Pr[C_{\mathcal A}(W_\lambda)=W_\lambda]
    +\operatorname{negl}(\lambda).
\]

\emph{Count short computations.}
Let $\mathcal C_{N,T}\subseteq\mathbb F_p^N$ contain all vectors
obtainable from $0,e_1,\ldots,e_N$ using at most $T$ additions
or subtractions.
Each instruction chooses a sign and two indices among at most
$L$ available vectors, and the output chooses one of at most
$L$ vectors.
Padding shorter computations with additions of zero yields
\[
    |\mathcal C_{N,T}|\le L(2L^2)^T.
\]
Consequently,
\begin{align*}
    \Pr[C_{\mathcal A}(W_\lambda)=W_\lambda]
    &\le \Pr[W_\lambda\in\mathcal C_{N,T}]\\
    &\le |\mathcal C_{N,T}|\,2^{-\kappa}\\
    &\le 2^{-\kappa+O((T+1)\log\lambda)}.
\end{align*}
Combining the preceding bounds gives
\[
    \Pr[\mathsf{MSM}_{\mathcal A,\mathcal D}(\lambda)=1]
    \le
    2^{-\kappa+O((T+1)\log\lambda)}
    +\operatorname{negl}(\lambda).
\]
Finally, $T\ge1$ and $T=o(\kappa/\log\lambda)$ imply
\[
    (T+1)\log\lambda=o(\kappa)
    \qquad\text{and}\qquad
    \kappa=\omega(\log\lambda).
\]
Hence the displayed bound is
\[
    2^{-(1-o(1))\kappa}+\operatorname{negl}(\lambda)
    =\operatorname{negl}(\lambda),
\]
as claimed.
\end{proof}

\begin{figure}[tp]
\centering
\fbox{\begin{minipage}{\dimexpr\linewidth-2\fboxsep-2\fboxrule\relax}
\small

\textbf{Inner Product Relation.}
Given $N=2^k$, $\mathbf g,\mathbf h\in\mathbb G^N$, and
$u\in\mathbb G$, define
\[
\mathcal R_{\mathrm{BP\text{-}IPA}}
=
\left\{
 \bigl((N,\mathbf g,\mathbf h,u),P,(\mathbf a,\mathbf b)\bigr)
 \;\middle|\;
 P=\mathbf g^{\mathbf a}\mathbf h^{\mathbf b}
   u^{\langle\mathbf a,\mathbf b\rangle}
\right\},
\]
where $\mathbf a,\mathbf b\in\mathbb F_p^N$.
For an even-length vector $\mathbf v$, write
$\mathbf v_L,\mathbf v_R$ for its first and second halves.
The operation $\circ$ denotes componentwise multiplication.

\medskip
\textbf{Interaction Phase.}
Set
\[
\begin{gathered}
N_0\gets N,\qquad
\mathbf g^{(0)}\gets\mathbf g,\qquad
\mathbf h^{(0)}\gets\mathbf h,\qquad
P^{(0)}\gets P,\\
\mathbf a^{(0)}\gets\mathbf a,\qquad
\mathbf b^{(0)}\gets\mathbf b.
\end{gathered}
\]
For $i=1,\ldots,k$:
\begin{enumerate}[
    label=\arabic*., 
    leftmargin=*,
    itemsep=0.4em,
    topsep=0.3em
]
\item
$\mathcal P$ sets $N_i\gets N_{i-1}/2$ and computes
\[
c_L=
\left\langle
  \mathbf a_L^{(i-1)},\mathbf b_R^{(i-1)}
\right\rangle,
\qquad
c_R=
\left\langle
  \mathbf a_R^{(i-1)},\mathbf b_L^{(i-1)}
\right\rangle,
\]
and
\[
\begin{aligned}
L_i
&=
\bigl(\mathbf g_R^{(i-1)}\bigr)^{\mathbf a_L^{(i-1)}}
\cdot
\bigl(\mathbf h_L^{(i-1)}\bigr)^{\mathbf b_R^{(i-1)}}
\cdot u^{c_L},
\\
R_i
&=
\bigl(\mathbf g_L^{(i-1)}\bigr)^{\mathbf a_R^{(i-1)}}
\cdot
\bigl(\mathbf h_R^{(i-1)}\bigr)^{\mathbf b_L^{(i-1)}}
\cdot u^{c_R}.
\end{aligned}
\]
$\mathcal P$ sends $L_i,R_i$ to $\mathcal V$.

\item
$\mathcal V$ samples $x_i\xleftarrow{\$}\mathbb F_p^*$
and sends $x_i$ to $\mathcal P$.

\item
$\mathcal P$ and $\mathcal V$ both compute
\[
P^{(i)}
=
L_i^{x_i^2}\cdot P^{(i-1)}\cdot R_i^{x_i^{-2}},
\]
and
\[
\begin{aligned}
\mathbf g^{(i)}
&=
\bigl(\mathbf g_L^{(i-1)}\bigr)^{x_i^{-1}}
\circ
\bigl(\mathbf g_R^{(i-1)}\bigr)^{x_i},
\\
\mathbf h^{(i)}
&=
\bigl(\mathbf h_L^{(i-1)}\bigr)^{x_i}
\circ
\bigl(\mathbf h_R^{(i-1)}\bigr)^{x_i^{-1}}.
\end{aligned}
\]

\item
$\mathcal P$ computes
\[
\begin{aligned}
\mathbf a^{(i)}
&=
x_i\mathbf a_L^{(i-1)}
+x_i^{-1}\mathbf a_R^{(i-1)},
\\
\mathbf b^{(i)}
&=
x_i^{-1}\mathbf b_L^{(i-1)}
+x_i\mathbf b_R^{(i-1)}.
\end{aligned}
\]
\end{enumerate}

After $k$ rounds, $\mathcal P$ sends
$a^{(k)},b^{(k)}\in\mathbb F_p$ to $\mathcal V$,
identifying length-one vectors with their entries.

\medskip
\textbf{Verification.}
$\mathcal V$ accepts if and only if
\[
P^{(k)}
=
\bigl(g^{(k)}\bigr)^{a^{(k)}}
\cdot
\bigl(h^{(k)}\bigr)^{b^{(k)}}
\cdot
u^{a^{(k)}b^{(k)}}.
\]

\end{minipage}}
\caption{Bulletproofs' Inner Product Argument (BP-IPA).}
\label{fig:bp-ipa}
\end{figure}

\subsection{Warm-Up: The Bulletproofs Instantiation}
\label{sec:bp-warmup}

Before the instantiation of $\Pi_{\mathsf{IPA}}$, we analyse a simpler one, in which the
inner-product argument is the transparent argument of Bulletproofs
(\cref{fig:bp-ipa}) over uniformly sampled generators.
We present it as a warm-up for two reasons.
The construction is a little simpler: the commitment key has no structure, no
pairing is involved, and a single commitment absorbs both vectors and their
inner product.
The analysis is also not specific to one scheme.
It uses only the recursive halving and the fact that each round's messages are
fixed functions of a fresh challenge, so it applies to Bulletproofs-like
arguments in general.
We stress that our construction for TRACE uses the other instantiation, with a
Pedersen commitment under a structured key and the argument of Daza, R\`afols
and Zacharakis (\cref{fig:pv-ipa}), and \cref{sec:ipa-nonamort-proof} transfers
the analysis to it.

\paragraph{Prover experiment for the inner-product argument.}
Fix $N=2^r\ge2$ and instantiate the generic group model above with the
$2N+1$ generators $\mathbf g,\mathbf h,u$, sampled independently of the
witness and of the challenge oracle.
We write $r$ for the number of rounds, denoted $k$ in \cref{fig:bp-ipa}.
Throughout this subsection $n$ denotes the number of identities.
A generic prover receives $(\mathbf a,\mathbf b)$ in the clear together with
the handles of the identity, of $\mathbf g,\mathbf h,u$, and of
$P=\mathbf g^{\mathbf a}\mathbf h^{\mathbf b}
u^{\langle\mathbf a,\mathbf b\rangle}$.
These are its only initial group handles.
It then runs the protocol of \cref{fig:bp-ipa} against the honest verifier.
We write $Q_{\mathrm{total}}$ for the number of oracle queries it makes, so
that every group multiplication and division is counted, including
preprocessing, and a scalar exponentiation is charged for the group
operations that implement it.
Challenges are uniform in $\mathbb F_p^\times$ and fresh, either because the
verifier samples them or because they are derived by full-transcript
Fiat--Shamir in the random oracle model.
We write $\mathsf{Acc}$ for the event that the verifier accepts.

\begin{theorem}[Group-operation lower bound for BP-IPA]
\label{thm:ipa-lower-bound}
Let $N=2^r\ge2$ and let $N$ and $\log p$ be polynomially bounded in
$\lambda$.
Suppose the joint distribution of $(\mathbf a,\mathbf b)\in\mathbb F_p^{2N}$
has HILL min-entropy at least $K$, and assume
$K,\log p=\omega(\log\lambda)$.
Then for every PPT generic prover $\mathcal A$ in the experiment above there
is a constant $c_{\mathcal A}>0$ such that
\[
    \Pr\!\left[
      \mathsf{Acc}\ \wedge\
      Q_{\mathrm{total}}<c_{\mathcal A}
          \left(N+\frac{K+(r-1)\log p}{\log\lambda}\right)
    \right]
    \le\operatorname{negl}(\lambda).
\]
\end{theorem}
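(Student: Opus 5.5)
The lower bound has three summands, and I would prove each separately and then add them. For any constants $c_1,c_2,c_3$ with $Q_{\mathrm{total}}\ge c_1N$, $Q_{\mathrm{total}}\ge c_2K/\log\lambda$ and $Q_{\mathrm{total}}\ge c_3(r-1)\log p/\log\lambda$ (each except with negligible probability on $\mathsf{Acc}$), a single constant works: $c_{\mathcal A}=\tfrac13\min(c_1,c_2,c_3)$ bounds the sum. Throughout I would use the symbolic simulation of \cref{thm:msm-entropy-lower-bound}. Every issued handle carries a formal coefficient vector over the $2N+1$ bases $\mathbf g,\mathbf h,u$, together with the extra handle $P$, which is expanded into its own coefficients. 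A union bound over pairs of handles then shows that the real and symbolic runs coincide except with probability $O(\mathrm{poly}/p)$. In the symbolic world, acceptance means that the transcript satisfies the verification identity as a formal identity in the bases.

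\emph{The $N$ term.} The final check requires $P^{(k)}=(g^{(k)})^{a^{(k)}}(h^{(k)})^{b^{(k)}}u^{a^{(k)}b^{(k)}}$, with $P^{(k)}=P\prod_i L_i^{x_i^2}R_i^{x_i^{-2}}$. Here $g^{(k)}$ has a nonzero coefficient $\prod_i x_i^{\pm1}$ on every $g_j$, and likewise $h^{(k)}$ on every $h_j$. Fix the prover's messages $L_i,R_i$, which are formal vectors chosen before $x_i$ is revealed. Expanding the identity in the last challenge $x_k$ makes both sides Laurent polynomials in $x_k$. Unless they agree coefficientwise, Schwartz--Zippel gives failure with probability $O(1)/p$. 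Suppose the coefficient match forces the prover to output $a^{(k)}\neq 0$ or $b^{(k)}\neq0$. Then the right-hand side has full support on at least $N$ of the bases, while the left-hand side is a linear combination of $P$ and the $L_i,R_i$. The degenerate case $a^{(k)}=b^{(k)}=0$ forces $P$ to be cancelled by the $L_i,R_i$, which again needs full support. To get the first bound I would simply argue that some issued handle has support of size at least $N/2$ on the bases. Each query at most doubles the size of the support (it is the union of two supports), and the initial handles have support $1$ except $P$. One therefore needs more care, since $P$ already has full support. I would instead use that $g^{(1)}$-type full-support vectors not proportional to $P$ must appear. Alternatively, I would obtain $\Omega(N)$ by noting that $L_1$ must equal $\mathbf g_R^{\mathbf a_L}\mathbf h_L^{\mathbf b_R}u^{c_L}$ symbolically, which follows from the standard extraction argument with fresh $x_1$. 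If $(\mathbf a,\mathbf b)$ has full support this costs $N$ queries; otherwise I would charge the support size and absorb it into the entropy term.

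\emph{The $K$ and $(r-1)\log p$ terms.} For $K$, apply \cref{thm:msm-entropy-lower-bound}-style counting to the first-round message. By the same Schwartz--Zippel argument, $L_1$ and $R_1$ are forced to equal the honest values, which determine $(\mathbf a,\mathbf b)$ up to the halves, and together with the final openings they determine the whole witness. A computation with $T$ queries outputs one of at most $L(2L^2)^T$ formal vectors. By the HILL-to-min-entropy transfer via the PPT symbolic test, success then forces $T=\Omega(K/\log\lambda)$. For the rounds $i\ge2$, condition on the transcript up to $x_{i-1}$. The honest $L_i$ depends on the folded vectors $\mathbf a^{(i-1)},\mathbf b^{(i-1)}$, whose coefficients are polynomials in the fresh uniform $x_{i-1}$. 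I would show that, as a function of $x_{i-1}$ alone, the required formal vector takes at least $(p-1)/O(1)$ distinct values, and so it has $\approx\log p$ bits of min-entropy given everything the prover knew before $x_{i-1}$. The queries made after $x_{i-1}$ and before sending $L_i$ must therefore number $\Omega(\log p/\log\lambda)$ by the same counting. These query windows are disjoint across $i=2,\dots,r$, so summing gives $(r-1)\log p/\log\lambda$. The windows for $K$ (before $x_1$) and for later rounds are also disjoint, but I only need the maximum, so overlap is harmless.

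\emph{Main obstacle.} The hardest step is justifying that each $L_i,R_i$ must equal its honest formal value, rather than merely that the final identity holds. A cheating prover could send a different $L_i$ and compensate later. The approach would be to treat the identity as a polynomial in all challenges and peel them from the last one backwards, applying Schwartz--Zippel at each step. This forces each round's message into an affine family determined by the next folded witness. It suffices that this family still contains $\Omega(p)$ possibilities indexed by $x_{i-1}$, and that the prover can only output a vector from its counted set. Degenerate witnesses where a folded half vanishes, so that the entropy in $x_{i-1}$ collapses, also need handling. I would treat these by noting that they occur with probability $O(r/p)$ over the challenges, or else by charging them against $K$.
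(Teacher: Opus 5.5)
\textbf{Gap 1: the $N$ term.} Your derivation of the $N$ term does not go through.

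Your route through $L_1$ only charges for the generator records in $\mathrm{supp}(\mathbf a_L)\cup\mathrm{supp}(\mathbf b_R)$, and similarly for $R_1$. The theorem assumes nothing about supports. Take $\mathbf a=\mu\,e_J$ with $\mu\in\mathbb F_p^\times$ and $J\in[N]$ uniform, and $\mathbf b=0$. Then the first messages cost $O(\log p)$ group operations, while $K=\log(p-1)+\log N$. So $K/\log\lambda$ can be far below $N$, and nothing can be ``absorbed into the entropy term''. Your other idea, that full-support vectors not proportional to $P$ must appear, is exactly what needs proof. The final verification equation cannot give it, because $P$ supplies full support for free.

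The missing idea is to look at the last-round messages individually, once the prescribed-coefficient step is in place.
\begin{itemize}
\item Just before round $r$, the two folded $\mathbf g$-generators $G_S,G_T$ have supports $S,T$ that partition $[N]$, and every coefficient is a nonzero product of challenges.
\item So the prescribed $\mathbf g$-parts of $L_r,R_r$ are $\widehat a_LG_T$ and $\widehat a_RG_S$. Here $\widehat a_L,\widehat a_R$ are the folds of $\mathbf a_S,\mathbf a_T$, and they are nonzero whenever those restrictions are, except with negligible probability.
\item Every handle has $\mathbf g$-part $\gamma\mathbf a+d$, with $d$ supported on the set $I$ of original $g_j$ records actually used.
\item If exactly one of $\mathbf a_S,\mathbf a_T$ is nonzero, one of the two messages forces $T\subseteq I$ or $S\subseteq I$.
\item If both are nonzero, a coordinate $j\in S\setminus I$ forces $\gamma=0$, and then any $k\in T\setminus I$ gives a contradiction.
\end{itemize}
Hence $|I|\ge N/2$ and $Q_{\mathrm{total}}\ge N/4-1$ for every witness. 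If $\mathbf a=0$, run the same argument on the $\mathbf h$-coordinates.

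\textbf{Gap 2: the $K$ term.} This gap is smaller but real. $P$ is an initial handle whose formal vector is the witness itself. So the set of formal vectors reachable in $T$ queries depends on the witness, and the bound $L(2L^2)^T$ from the MSM theorem does not apply to outputs directly. Appealing to the final openings does not help either, since they come after the challenges and lie outside the counted pre-challenge computation.

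You need to count syntactic traces and show that each is consistent with at most one witness. A fixed trace outputs $L_1=P^{\alpha}\mathbf g^{d_g}\mathbf h^{d_h}u^{d_u}$ and $R_1=P^{\beta}\mathbf g^{e_g}\mathbf h^{e_h}u^{e_u}$. With $v=(\mathbf a_L,\mathbf a_R,\mathbf b_L,\mathbf b_R)$, $J_Lv=(0,\mathbf a_L,\mathbf b_R,0)$ and $J_Rv=(\mathbf a_R,0,0,\mathbf b_L)$, the prescribed coefficients give $(J_L-\alpha I)v=d$ and $(J_R-\beta I)v=e$. These have at most one solution: $J_L$ and $J_R$ are nilpotent, so either shifted matrix is invertible when its scalar is nonzero, and if $\alpha=\beta=0$ then $J_L+J_R$ permutes the blocks. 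The HILL transfer then uses a distinguisher that runs $\mathcal A$ symbolically on the candidate witness.

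\textbf{What matches the paper.} Your fresh-challenge argument for the $(r-1)\log p$ term and your disjoint-window accounting match the paper. With Fiat--Shamir challenges you additionally need the ordering of relevant oracle queries, and a union bound over prefixes to cover grinding.
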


\begin{proof}
Fix $\lambda$ and a PPT generic prover $\mathcal A$.
All logarithms are binary.
We separate three sources of work.
The first messages encode the witness and therefore incur a cost determined
by $K$.
Each later message pair depends on a fresh challenge, which incurs an
additional cost determined by $\log p$.
Finally, the last pair must involve many of the original generators, which
gives the $N$ term.
We first explain why an arbitrary accepting prover must satisfy the
coefficient constraints used in these arguments.

\emph{Formal computation and fresh challenges.}
Associate the $2N+1$ initial generators with the standard basis of
$\mathbb F_p^{2N+1}$, and associate $P$ with
$(\mathbf a,\mathbf b,\langle\mathbf a,\mathbf b\rangle)$.
Multiplication and division add and subtract formal vectors.
The symbolic group oracle assigns the same handle to equal vectors and
uniformly distinct random handles to distinct vectors.
As in the proof of \cref{thm:msm-entropy-lower-bound}, the real and symbolic
experiments differ only if distinct formal vectors evaluate to the same
group element, which for each fixed nonzero difference has probability at
most $1/(p-1)$ over the independent random generators.
A union bound over the polynomially many vectors considered by the prover
and the verifier bounds the difference by $\operatorname{poly}(\lambda)/(p-1)$,
which is negligible, so it suffices to work symbolically.

For Fiat--Shamir, let $\tau_{i-1}$ contain all earlier messages and
challenges and set
\[
    x_i=H(\mathsf{ipa},\mathsf{stmt},i,\tau_{i-1},L_i,R_i)
          \in\mathbb F_p^\times,
    \qquad
    \mathsf{stmt}=(N,\mathbf g,\mathbf h,u,P),
\]
using an unambiguous encoding.
We append the verifier's evaluation of any unqueried challenges to the
prover's oracle-query log, before the verifier's group computations.
Except with negligible probability, the first queries for the prefixes of
the submitted transcript occur in protocol order: a later prefix queried
first would have to guess the answer to an earlier fresh query, and a
relevant query cannot contain a group handle first obtained later except by
guessing a fresh encoding.
A union bound over the polynomially many queries and handles bounds both
events by $\operatorname{poly}(\lambda)/(p-1)$.

Consequently we can divide an execution into the interval before the first
relevant challenge query and the intervals between successive relevant
challenge queries.
The interactive case has this ordering automatically.
In either case, conditional on the history immediately before a fresh
challenge is sampled, that challenge is uniform.
Whenever we bound an exceptional event at a fixed prefix below, a union
bound over all polynomially many prefixes also covers the prover's adaptive
choice of which transcript to submit.

\emph{Acceptance forces the prescribed message coefficients.}
Project $\mathbb F_p^{2N+1}$ onto its $\mathbf g,\mathbf h$ coordinates,
discarding the $u$ coordinate.
Let $S_i$ be the span of the projected folded generators after round $i$.
The folded generators are formally linearly independent, and
\[
    S_r\subseteq S_{r-1}\subseteq\cdots\subseteq S_0=\mathbb F_p^{2N}.
\]
Write $p_i,\ell_i,\rho_i$ for the projections of $P^{(i)},L_i,R_i$, so that
\[
    p_i=p_{i-1}+x_i^2\ell_i+x_i^{-2}\rho_i .
\]
The final verification equation implies $p_r\in S_r$.

Suppose $p_i\notin S_i$ at some earlier round.
Since acceptance requires $p_r\in S_r\subseteq S_i$, there is a later round
$j$ at which the commitment's image in the quotient by $S_i$ first becomes
zero.
At that round
\[
    \overline p_{j-1}+x_j^2\overline\ell_j+x_j^{-2}\overline\rho_j=0,
    \qquad \overline p_{j-1}\ne0 ,
\]
and all three coefficients are fixed before $x_j$ is sampled.
Multiplying by $x_j^2$ gives a nonzero vector polynomial of degree at most
four, so this event has probability at most $4/(p-1)$.
A union bound over prefixes and their ancestors gives
\begin{equation}
    \Pr[\mathsf{Acc}\ \wedge\ \exists i:\ p_i\notin S_i]
    \le\operatorname{negl}(\lambda).
    \label{eq:ipa-lb-membership}
\end{equation}

We now proceed inductively.
Suppose the previous messages have their prescribed
$\mathbf g,\mathbf h$ coefficients, so that the current commitment has
witness coefficients $(a_L,a_R,b_L,b_R)$ in the current folded basis.
For fixed proposed messages, membership
\[
    p_{i-1}+x^2\ell_i+x^{-2}\rho_i\in S_i(x)
\]
is a system of polynomial equations of degree at most six after clearing
denominators.
If these equations hold identically, the components outside $S_{i-1}$ first
force $\ell_i,\rho_i\in S_{i-1}$.
Inside that space the generator-folding constraints give
\begin{align*}
    &x^6\ell_{g,L}+x^4(a_L-\ell_{g,R})
        +x^2(\rho_{g,L}-a_R)-\rho_{g,R}=0,\\
    &-x^6\ell_{h,R}+x^4(\ell_{h,L}-b_R)
        +x^2(b_L-\rho_{h,R})+\rho_{h,L}=0 .
\end{align*}
They are identities precisely when, in the current basis,
\begin{equation}
    \ell_i=(0,a_L,b_R,0),
    \qquad
    \rho_i=(a_R,0,0,b_L).
    \label{eq:ipa-lb-prescribed}
\end{equation}
Otherwise at least one nonzero polynomial of degree at most six vanishes at
the fresh challenge, which happens with probability at most $6/(p-1)$.
Considering the first round at which \eqref{eq:ipa-lb-prescribed} fails and
using \eqref{eq:ipa-lb-membership} shows that every accepting transcript has
the prescribed coefficients in every round, except with negligible
probability.
No restriction on the messages' $u$ coefficients is needed for this
conclusion.

We also exclude vanishing folded witnesses.
HILL indistinguishability, applied to the efficient zero test, gives
\[
    \Pr[(\mathbf a,\mathbf b)=0]\le2^{-K}+\operatorname{negl}(\lambda)
    =\operatorname{negl}(\lambda).
\]
If a fixed nonzero vector is folded as $xz_L+x^{-1}z_R$, obtaining zero
requires the nonzero coordinate polynomial $x^2z_L+z_R$ to vanish, which
happens for at most two nonzero values of $x$.
The reversed folding rule has the same property.
A union bound therefore excludes a nonzero vector becoming zero along the
folding process.
We use this both for the current witness and, later, for its restrictions to
either of the two supports of the final folded generators; each such
restriction itself follows a sequence of the same folding operations.

\emph{The first messages incur the entropy cost.}
Starting from $M$ handles, a computation using at most $s$ group operations
and selecting two output handles has at most
\begin{equation}
    \Gamma_s(M)=(M+s)^2\bigl(2(M+s)^2\bigr)^s
    \label{eq:ipa-lb-traces}
\end{equation}
possible descriptions: each instruction selects two available handles and a
multiplication or a division, and the factor $(M+s)^2$ accounts for the
outputs.
Shorter computations are padded with operations on the identity.
This is the counting bound of \cref{thm:msm-entropy-lower-bound} with the
output widened to a pair.

Initially $M_0=2N+3$.
For a fixed syntactic trace over these input records, the two outputs have
expressions
\[
    L_1=P^\alpha\mathbf g^{d_g}\mathbf h^{d_h}u^{d_u},
    \qquad
    R_1=P^\beta\mathbf g^{e_g}\mathbf h^{e_h}u^{e_u},
\]
whose exponents depend only on the trace.
Set
\[
    v=(a_L,a_R,b_L,b_R),\qquad
    J_Lv=(0,a_L,b_R,0),\qquad
    J_Rv=(a_R,0,0,b_L).
\]
The prescribed first messages satisfy
\begin{equation}
    (J_L-\alpha I)v=d,
    \qquad (J_R-\beta I)v=e,
    \qquad d=(d_g,d_h),\ e=(e_g,e_h).
    \label{eq:ipa-lb-decode}
\end{equation}
These equations determine at most one witness pair.
If $\alpha\ne0$ then $J_L-\alpha I$ is invertible, because $J_L^2=0$ makes
$J_L$ nilpotent; the same applies to $J_R-\beta I$ if $\beta\ne0$.
If both are zero, adding the equations determines $v$, since $J_L+J_R$
permutes the four blocks and is therefore invertible.

Thus the set of witnesses for which some trace of length at most $s$
produces the prescribed first messages has size at most $\Gamma_s(M_0)$, and it
is independent of the oracle randomness.
On a distribution of min-entropy at least $K$ its probability is at most
$\Gamma_s(M_0)2^{-K}$.
To transfer this to HILL entropy, an efficient distinguisher, given a
candidate witness, simulates the symbolic group and the random oracle, runs
$\mathcal A$, and checks whether its first messages have the prescribed
coefficients and were obtained within the first $s$ group operations.
This tests the actual execution and does not enumerate short traces.
Writing $q_0$ for the number of group operations before the first relevant
challenge, acceptance, the coefficient condition and the ordering of
relevant queries give
\begin{equation}
    \Pr[\mathsf{Acc}\ \wedge\ q_0\le s]
    \le \Gamma_s(M_0)2^{-K}+\operatorname{negl}(\lambda).
    \label{eq:ipa-lb-first}
\end{equation}

\emph{Each subsequent pair incurs a fresh-challenge cost.}
Fix the history immediately before challenge $x_i$ is sampled, where $i<r$,
and suppose the current witness is nonzero.
Substituting the witness and generator folds into the prescribed next
messages gives a pair of formal coefficient vectors
\begin{equation}
    F_i(x)=A_ix^2+B_i+C_ix^{-2},
    \label{eq:ipa-lb-laurent}
\end{equation}
with $A_i,B_i,C_i$ fixed by this history.
This function is nonconstant.
To see it explicitly, split the current vectors into four blocks and put
\[
    D_{jk}=\operatorname{coeff}_{g,h}
           \bigl(\mathbf g_k^{\mathbf a_j}\mathbf h_j^{\mathbf b_k}\bigr),
\]
where the bases and witnesses are the current ones, before folding.
Expanding the next messages yields
\[
    A_i=(D_{14},D_{23}),\quad
    B_i=(D_{12}+D_{34},D_{21}+D_{43}),\quad
    C_i=(D_{32},D_{41}).
\]
Because the current generators are formally independent, $A_i=C_i=0$ would
force all four blocks of both witness vectors to be zero.

For each fixed target $z$ the equation $F_i(x)=z$ implies
\[
    A_ix^4+(B_i-z)x^2+C_i=0 ,
\]
in which some coordinate is a nonzero polynomial of degree at most four, so
at most four values of $x\in\mathbb F_p^\times$ map to $z$.
Hence, for every fixed prechallenge history,
\[
    \max_z\Pr[F_i(x_i)=z\mid\text{history}]\le\frac4{p-1},
\]
that is, the target pair has min-entropy at least $\log(p-1)-2$ before the
fresh challenge is sampled.
If the history contains $M$ handles, at most $\Gamma_s(M)$ fixed output pairs are
obtainable by traces of $s$ additional group operations, so
\begin{equation}
    \Pr[\text{prescribed next pair obtainable within $s$ operations}
        \mid\text{history}]
    \le\frac{4\Gamma_s(M)}{p-1}.
    \label{eq:ipa-lb-transition}
\end{equation}
Arbitrary scalar computation and additional oracle queries do not increase
the number of possible group-operation traces.
Moreover \eqref{eq:ipa-lb-transition} holds before conditioning on whether
this prefix is eventually selected, so a union bound over all fresh queries
accounts for grinding and adaptive transcript selection.

\emph{The final messages must involve many original generators.}
Let $S,T\subseteq[N]$ be the supports of the two $\mathbf g$-generators just
before the last round.
They partition $[N]$, each has size $N/2$, and every coefficient on either
support is nonzero.
Write $G_S,G_T$ for these two formal vectors and
$\widehat a_L,\widehat a_R$ for the corresponding folded witness entries, so
that $\widehat a_L$ is obtained by folding the restriction $\mathbf a_S$ and
$\widehat a_R$ by folding $\mathbf a_T$.
By the nonvanishing argument above, each is nonzero whenever its original
restriction is nonzero, except with negligible probability.
Assume first that $\mathbf a\ne0$; if instead $\mathbf a=0$ then
$\mathbf b\ne0$ and the same argument uses the $\mathbf h$ coordinates.
The prescribed $\mathbf g$ coefficients of the last messages are
\begin{equation}
    \ell_r^{g}=\widehat a_LG_T,
    \qquad
    \rho_r^{g}=\widehat a_RG_S .
    \label{eq:ipa-lb-final}
\end{equation}

Fix syntactic expressions for the actual computation over its initial input
records, and let $I\subseteq[N]$ contain the indices of the original $g_j$
records used as operands, together with those selected directly as either of
the last two messages.
Since $P$ has $\mathbf g$ coefficient vector $\mathbf a$, and the other
initial handles have singleton or zero $\mathbf g$ support, there are
scalars $\gamma,\delta$ and vectors $d,e$ with
\begin{equation}
    \ell_r^{g}=\gamma\mathbf a+d,
    \qquad
    \rho_r^{g}=\delta\mathbf a+e,
    \qquad
    \operatorname{supp}(d),\operatorname{supp}(e)\subseteq I .
    \label{eq:ipa-lb-support}
\end{equation}
We show $|I|\ge N/2$ in two cases.

\emph{Case 1: exactly one of $\mathbf a_S,\mathbf a_T$ is nonzero.}
Suppose $\mathbf a_S\ne0$ and $\mathbf a_T=0$.
Then $\widehat a_L\ne0$, so $\ell_r^g$ is nonzero at every coordinate of
$T$, while $\gamma\mathbf a$ is zero throughout $T$.
Equation~\eqref{eq:ipa-lb-support} therefore forces $T\subseteq I$.
The other case is symmetric, using $\rho_r^g$.

\emph{Case 2: both $\mathbf a_S$ and $\mathbf a_T$ are nonzero.}
Then both folded scalars are nonzero.
Suppose there were $j\in S\setminus I$ and $k\in T\setminus I$.
At coordinate $j$, equations~\eqref{eq:ipa-lb-final}
and~\eqref{eq:ipa-lb-support} give $0=\gamma a_j$ and $0\ne\delta a_j$,
hence $\gamma=0$.
At coordinate $k$ the same equations give $0\ne\gamma a_k$, a contradiction.
Thus $S\subseteq I$ or $T\subseteq I$, again giving $|I|\ge N/2$.

Each binary group operation uses at most two original generator records as
operands, and at most two further records are selected directly as the last
messages, so
\begin{equation}
    \frac N2\le|I|\le2Q_{\mathrm{total}}+2,
    \qquad
    Q_{\mathrm{total}}\ge\frac N4-1,
    \label{eq:ipa-lb-linear}
\end{equation}
on every accepting execution outside the negligible exceptional events.
This argument includes all preprocessing: a generator used to form a cached
handle has already entered the counted computation.

\emph{Combining the bounds.}
Put $\ell=\log\lambda$ and $m=\log p$.
There is a constant $C>0$, depending on the polynomial bounds for
$\mathcal A$ and the parameters, with $\log \Gamma_s(M)\le C(s+1)\ell$ for all
handle counts and cutoffs used below.
Set
\[
    t_0=\left\lfloor\frac K{4C\ell}\right\rfloor,
    \qquad
    t_1=\left\lfloor\frac m{4C\ell}\right\rfloor .
\]
Since $K,m=\omega(\ell)$, \eqref{eq:ipa-lb-first} bounds the probability of
acceptance with $q_0\le t_0$ by $2^{-K/2}+\operatorname{negl}(\lambda)$ for
all sufficiently large $\lambda$.
Likewise \eqref{eq:ipa-lb-transition}, followed by a union bound over
polynomially many queries, bounds the probability of an accepting transcript
with any transition costing at most $t_1$ operations by
$\operatorname{poly}(\lambda)2^{-m/2}+\operatorname{negl}(\lambda)$.
Both quantities are negligible.

Let $q_i$ count prover group operations between the relevant challenge
queries $x_i$ and $x_{i+1}$, for $1\le i<r$.
If a challenge is first evaluated by the verifier after the proof is output,
no further prover operations occur in that interval.
The intervals counted by $q_0,q_1,\ldots,q_{r-1}$ are disjoint, so except
with negligible probability acceptance implies
\[
    Q_{\mathrm{total}}
    \ge q_0+\sum_{i=1}^{r-1}q_i
    >t_0+(r-1)t_1
    =\Omega\!\left(\frac{K+(r-1)\log p}{\log\lambda}\right).
\]
There is no further interval to charge after the last challenge, because the
last prover message consists only of scalars.

The support argument may count operations already included in these
intervals, so we combine it by taking a maximum:
\[
    Q_{\mathrm{total}}
    \ge\max\left\{\frac N4-1,\ t_0+(r-1)t_1\right\}
    =\Omega\!\left(N+\frac{K+(r-1)\log p}{\log\lambda}\right),
\]
where the last step uses $K/\log\lambda\to\infty$ to absorb the additive
constant and the fact that the maximum of two nonnegative quantities is at
least half their sum.
All exceptional events above have negligible total probability, which proves
the claimed statement.
\end{proof}

\paragraph{The warm-up construction.}
Write $\Pi^{\mathsf{BP}}_{\mathsf{IPA}}$ for the argument with registered prover
obtained from \cref{app:construction} by using the argument of
\cref{fig:bp-ipa} over uniformly sampled generators
$\mathbf g,\mathbf h\in\mathbb G^N$ and $u\in\mathbb G$.
$\setup$ samples these generators, and we write $H=\prod_{i\le N}\mathbf h_i$.
$\KG(\pcrs)$ outputs a signature key pair with $\pk=\mathsf{svk}$.
Messages are traces $E\in\mathbb F_p^N$ with $\Com(\pcrs,E)=\mathbf g^{E}$.
The prover sets $v=\langle E,\mathbf 1\rangle$, so that the statement
$P=C_E\cdot H\cdot u^{v}
=\mathbf g^{E}\mathbf h^{\mathbf 1}u^{\langle E,\mathbf 1\rangle}$ of
$\rel_{\mathrm{BP\text{-}IPA}}$ is computable from public data.
It derives $\theta=\ro(\mathsf{claim},C_E,\pk,v,\nonce)$ and runs the prover of
\cref{fig:bp-ipa} on $(E,\mathbf 1)$ with the Fiat--Shamir oracle $\ro_\theta$.
The signature of \cref{app:construction} plays no role in non-amortizability,
since it only adds work, and we omit it.

\begin{proposition}[Warm-up: non-amortizability of $\Pi^{\mathsf{BP}}_{\mathsf{IPA}}$]
\label{prop:bp-nonamort}
Let $N=2^r$ and $\log p$ be polynomially bounded in $\lambda$ with
$\log p=\omega(\log\lambda)$, and let $\Delta$ be as in
\cref{thm:ipa-nonamort}.
In the generic group model with a random oracle,
$\Pi^{\mathsf{BP}}_{\mathsf{IPA}}$ satisfies non-amortizability with overhead
$\Delta$ over every distribution $\xwDist$.
\end{proposition}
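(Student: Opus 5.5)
We reduce the $n$-identity experiment of \cref{def:non-amort} to $n$ applications of the fresh-challenge part of the proof of \cref{thm:ipa-lower-bound}. The entropy term $K$ and the $N$ term are deliberately left unused. The all-ones vector and the trace are shared across identities, so neither can be charged per identity. Only the transitions after the first challenge carry identity-specific work.

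\textbf{Setup and idealization.} Fix $n=\poly$ and an adversary with total time $T=o(n\Delta)$. Since every oracle query costs a time step, it makes at most $T$ group operations. We simulate the group symbolically over the basis $(\mathbf g,\mathbf h,u)$ exactly as in the proof of \cref{thm:ipa-lower-bound}. The coupling loss is $\poly/(p-1)$, which is negligible because $\log p=\omega(\log\lambda)$. All $n$ statements share the same $P=C_E\cdot H\cdot u^{v}$ and witness $(E,\mathbf 1)$, but for identity $j$ the Fiat--Shamir seed is $\theta_j=\ro(\mathsf{claim},C_E,\pk_j,v,\nonce)$. Key generation produces distinct $\pk_j$ except with negligible probability (a correct EUF-CMA scheme has unpredictable keys), so the seeds are distinct. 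Consequently the challenges $x^{(j)}_i$ for different $j$ are answers to distinct random-oracle queries, each uniform conditioned on the history at the moment it is first queried.

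\textbf{Per-identity charging.} For each accepting identity $j$ we invoke the ordering argument and the coefficient-forcing argument, giving \eqref{eq:ipa-lb-membership} and \eqref{eq:ipa-lb-prescribed}, plus a union bound over all polynomially many queries. Outside negligible events, identity $j$'s messages then have prescribed coefficients, and its relevant challenge queries occur in protocol order. The witness $(E,\mathbf 1)$ is nonzero because $\mathbf b=\mathbf 1$, so the folded witnesses stay nonzero and \eqref{eq:ipa-lb-laurent} is nonconstant. Equation \eqref{eq:ipa-lb-transition} then applies to every transition $i=1,\dots,r-1$ of every identity. Whatever the global history $h$ of $M\le N+T+O(1)$ handles is when $x^{(j)}_i$ is sampled, the prescribed next pair for $(j,i)$ is reachable from $h$ within $t_1=\lfloor \log p/(4C\log\lambda)\rfloor$ further operations with probability at most $4\Gamma_{t_1}(M)/(p-1)\le 2^{-\log p/2}$. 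A union bound over all $n(r-1)$ transitions and all prefixes gives the following. Except with negligible probability, for every $(j,i)$ at least $t_1$ group operations occur after $x^{(j)}_i$ is sampled and before the handle $L^{(j)}_{i+1}$ (or $R^{(j)}_{i+1}$) is first issued.

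\textbf{Main obstacle: disjointness across identities.} Within one identity the intervals are disjoint, but different identities' intervals may interleave, and one operation could in principle serve several pending transitions. We want $T\ge n(r-1)t_1$. The claim is stronger than \eqref{eq:ipa-lb-transition} alone: it concerns operations \emph{attributable} to $(j,i)$. We would handle this by strengthening the counting as follows. At the time $x^{(j)}_i$ is sampled, we fix the entire future computation except for the freshly random target. We then ask whether the target lies among the at most $\Gamma_s(M)$ handles obtainable from the current history plus any $s$ operations chosen \emph{among the adversary's future operations that do not depend on $x^{(j)}_i$}. The plan is to charge to $(j,i)$ the operations whose formal vectors have a nonzero component along the fresh direction of $F^{(j)}_i(x^{(j)}_i)$. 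An operation producing a vector that mixes the fresh targets of two pending transitions is one linear combination, and its output cannot equal both prescribed pairs. So we aim to show, via a rank/dimension argument over the formal vectors, that the $n(r-1)$ fresh targets span independent directions modulo the history. The operations needed to reach them would then number at least the sum of per-target costs, up to a factor-$2$ loss. This is where the argument is most delicate, and where we would first try to formalize it by treating all challenges as independent formal variables.

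Granting the previous step, we obtain $T\ge c\,n(r-1)\log p/\log\lambda=c\,n\Delta$ on every accepting run outside negligible events. This contradicts $T/n=o(\Delta)$ for large $\lambda$, so the success probability is negligible.
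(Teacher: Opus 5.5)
Your setup, the distinct-seed argument and the per-transcript coefficient forcing agree with the paper. However, the proof has a genuine gap at exactly the step you flag, and the repair you sketch would not close it.

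Summing the per-transition charge $t_1$ from \eqref{eq:ipa-lb-transition} needs the $n(r-1)$ intervals to be disjoint. That fails once $\adv$ interleaves the executions, and the paper notes this and abandons the interval decomposition entirely. A rank argument cannot replace it, for three reasons. First, the initial handles already include every basis element $\mathbf g,\mathbf h,u$. So the history spans all of $\mathbb F_p^{2N+1}$ from the start, and no target $F_{j,i}(x^{(j)}_i)$ ever has a component outside it. Second, charging an operation to every target whose ``direction'' appears in its output has no factor-$2$ bound, since both operands can be accumulated combinations touching many targets. Third, challenges cannot be treated as formal variables: in the generic model, a scalar multiplication by $x^{(j)}_i$ is a challenge-dependent sequence of additions, not a symbolic coefficient. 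What makes each target expensive is that it carries about $\log p$ bits of fresh unpredictability. That is an entropy statement, not a dimension one.

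The missing idea is a single global compression count that never attributes operations to identities. Let a description be the whole straight-line program of $Q$ group operations (two handle indices and a sign each), plus the indices of the $2m$ handles carrying $(L^{(j)}_{i+1},R^{(j)}_{i+1})$, where $m=n(r-1)$. There are at most $(2M^2)^Q M^{2m}$ descriptions. The initial handles have known formal vectors: unit vectors, and $(E,\mathbf 1,\langle E,\mathbf 1\rangle)$ for $P$. Hence a description fixes the formal vector of every message. The prescribed-coefficient condition $(\ell^{(j)}_{i+1},\rho^{(j)}_{i+1})=F_{j,i}(x^{(j)}_i)$, with $F_{j,i}$ nonconstant, then lets you reconstruct each challenge chain in protocol order with at most $4$ choices per challenge. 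So at most $4^m$ challenge tuples are consistent with one description.

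The challenges are $m$ distinct uniform random-oracle answers. Take a union bound over which $m$ of the $q_H$ answers play that role; this also absorbs grinding over nonces. It gives
$\Pr[\bigwedge_j\vfy=1\wedge Q\le q]\le q_H^m(2M^2)^qM^{2m}4^m/(p-1)^m+\operatorname{negl}(\lambda)$.
With $\log M,\log q_H=O(\log\lambda)$ and $\log p=\omega(\log\lambda)$, this forces $Q=\Omega(m\log p/\log\lambda)=\Omega(n\Delta)$. Your per-transition use of \eqref{eq:ipa-lb-transition} then becomes unnecessary.
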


\begin{proof}
Fix $n=\operatorname{poly}(\lambda)$ and an adversary $\adv$ with
$\mathsf{Time}(\adv)/n=o(\Delta)$.
Let $Q$ be the number of group oracle queries it makes and $q_H$ the number
of random oracle queries, so $Q,q_H\le\mathsf{Time}(\adv)$ and both are
polynomially bounded.
Write $M_0$ for the number of initial handles and $M=M_0+Q$ for the number
of handles ever issued, so $\log M,\log q_H=O(\log\lambda)$.
Set $m=n(r-1)$.

\emph{Symbolic execution and independent challenge chains.}
All $n$ proofs are produced against one generic group, so the symbolic
simulation and the collision bound of \cref{thm:ipa-lower-bound} apply
verbatim to the whole execution.
The public keys $\pk_1,\ldots,\pk_n$ are distinct except with negligible
probability, so the seeds $\theta_j$ are distinct, and each of the $nr$
challenges $x_i^{(j)}$ is a random oracle answer on an input of its own.
Since $\adv$ chooses the nonce, and with it the seeds, it may query the
oracle on many candidate nonces and submit proofs for the one it prefers, so
the submitted challenges are selected answers rather than independent uniform
values.
The counting step below accounts for this selection by a union bound over
which oracle answers play the role of the challenges.
Each witness $(E,\mathbf 1)$ is nonzero, since its second component is, and the
nonvanishing argument of
\cref{thm:ipa-lower-bound} applies to each of the $n$ folding processes.
No entropy assumption on $\xwDist$ is needed: the adversary is given the
secret keys, so all unpredictability that the argument exploits comes from
the random oracle.

\emph{Prescribed coefficients.}
Applying the step \emph{Acceptance forces the prescribed message
coefficients} of \cref{thm:ipa-lower-bound} to each of the $n$ transcripts
separately, and taking a union bound over $j$, except with negligible
probability every accepting transcript satisfies, for all $j\in[n]$ and all
$i\in[r-1]$,
\begin{equation}
    \bigl(\ell^{(j)}_{i+1},\rho^{(j)}_{i+1}\bigr)
    =F_{j,i}\bigl(x^{(j)}_i\bigr),
    \label{eq:na-transition}
\end{equation}
where $F_{j,i}$ is the nonconstant Laurent polynomial
\eqref{eq:ipa-lb-laurent} determined by $(E,\mathbf 1)$, the commitment
key and $x^{(j)}_1,\ldots,x^{(j)}_{i-1}$.

\emph{Global counting.}
The interval decomposition used in \cref{thm:ipa-lower-bound} does not carry
over, because $\adv$ may interleave the $n$ executions arbitrarily and an
interval opened by one challenge may contain work charged to another.
We therefore count the whole execution at once.
A \emph{description} consists of the indices of the two input handles and
the sign of each of the $Q$ group operations, together with the indices of
the two handles carrying the messages
$\bigl(L^{(j)}_{i+1},R^{(j)}_{i+1}\bigr)$ for each of the $m$ pairs.
The number of descriptions is at most
\[
    \bigl(2M^2\bigr)^{Q}M^{2m}.
\]
Fix a description.
Since the initial handles have formal vectors determined by the commitment
key, by $E$ and by the secret keys, all of which $\adv$ receives, the
description determines the formal vector of every handle it names.
Reconstruct the challenges as follows: for each $j$, and for
$i=1,\ldots,r-1$ in order, the polynomial $F_{j,i}$ is determined by the
already reconstructed $x^{(j)}_1,\ldots,x^{(j)}_{i-1}$, and
\eqref{eq:na-transition} clears to a nonzero polynomial equation of degree
at most four in $x^{(j)}_i$, which has at most four solutions in
$\mathbb F_p^\times$.
Hence at most $4^m$ challenge tuples are consistent with a single
description.
Taking a union bound over which $m$ of the $q_H$ random oracle answers play
the role of the challenges, and using that those answers are independent and
uniform,
\[
    \Pr[\textstyle\bigwedge_j\vfy=1\ \wedge\ Q\le q]
    \le\frac{q_H^m\bigl(2M^2\bigr)^{q}M^{2m}4^m}{(p-1)^m}
      +\operatorname{negl}(\lambda).
\]
Taking logarithms, there is a constant $C'>0$ with
\[
    \log\Pr\le C'(q+m)\log\lambda-m\log(p-1)
              +\operatorname{negl}(\lambda).
\]
Since $\log p=\omega(\log\lambda)$, choosing
$q=\bigl\lfloor m\log p/(2C'\log\lambda)\bigr\rfloor-m$ makes the right-hand
side at most $-m\log p/2$ for all sufficiently large $\lambda$, and
$m\log p\ge\log p=\omega(\log\lambda)$, so this probability is negligible.
Therefore, except with negligible probability, an execution in which all $n$
proofs verify has
\[
    Q=\Omega\!\left(\frac{m\log p}{\log\lambda}\right)
     =\Omega\!\left(\frac{n(r-1)\log p}{\log\lambda}\right)
     =\Omega(n\Delta).
\]

\emph{Conclusion.}
Every group oracle query costs one step, so
$\mathsf{Time}(\adv)\ge Q$.
The previous paragraph gives a constant $c>0$ such that an execution in
which all $n$ proofs verify has $Q\ge cn\Delta$, except with negligible
probability, whereas $\mathsf{Time}(\adv)/n=o(\Delta)$ gives
$\mathsf{Time}(\adv)<cn\Delta$ for all sufficiently large $\lambda$.
The two are inconsistent, so
$\Pr[\bigwedge_j\vfy^\ro(\vcrs,\pk_j,\Com(\pcrs,\vec a),\nonce,\pi_j)=1]
\le\operatorname{negl}(\lambda)$.
\end{proof}

\paragraph{What the bound does and does not say.}
Using $\log p=\Omega(\lambda)$ from the generic group model above,
\cref{prop:bp-nonamort}, and with it \cref{thm:ipa-nonamort}, gives, for $N\ge4$, a marginal cost per identity of
\[
    \Delta=\Omega\!\left(\frac{\log N\cdot\lambda}{\log\lambda}\right),
\]
which grows with the trace length and with the security parameter and is
therefore a genuine per-identity charge.
It is nevertheless far below the $\Theta(N\log p)$ group operations the
honest prover performs, and the gap is not an artifact of the proof.
Neither of the two other terms in \cref{thm:ipa-lower-bound} survives the
experiment.
The $N$ term of \eqref{eq:ipa-lb-linear} charges the prover for touching
half of the generator handles, and once touched those handles remain
available, so the whole charge is paid by the first proof.
The $K$ term charges the first message pair for encoding the witness, and
no key enters that pair:
$L_1=\mathbf g_R^{E_L}\mathbf h_L^{\mathbf 1}u^{\langle E_L,\mathbf 1\rangle}$
depends only on the trace and the commitment key, so all $n$ identities share
it.
\Cref{app:amortization} shows how much of the later rounds can be shared as
well, for the argument of \cref{fig:pv-ipa}.
We therefore do not claim that \cref{app:cost}'s $\Theta(\ell)$ per identity
is necessary; that is a statement about the honest prover.
Closing the gap would require a lower bound for multi-scalar
multiplications with structured exponents and polynomial preprocessing,
which neither the counting argument nor the support argument can provide.

\subsection{A Publicly Verifiable Inner-Product Argument}
\label{sec:pv-ipa}

\Cref{fig:pv-ipa} is the argument $\Pi_{\mathsf{IPA}}$ uses in place of the
transparent recursion of \cref{fig:bp-ipa}, and it is what keeps the on-chain
verifier logarithmic.
Its relation is $\rel_{\mathsf{IP}}[\mathsf{ck}_{\mathsf P}]$ of
\cref{app:ingredients}.
Unlike $\rel_{\mathrm{BP\text{-}IPA}}$, the two vectors are committed under
separate structured keys, and the inner product $z$ is a public field element
instead of being absorbed into a $u^{\langle\mathbf a,\mathbf b\rangle}$ factor.
In both instantiations the key enters only through the Fiat--Shamir label.
\Cref{thm:ipa-nonamort} is proved for this argument in
\cref{sec:ipa-nonamort-proof}, starting from the warm-up of
\cref{sec:bp-warmup}.

\begin{figure}[tp]
\centering
\fbox{\begin{minipage}{\dimexpr\linewidth-2\fboxsep-2\fboxrule\relax}
\scriptsize

\textbf{Setting.}
Let $e:\mathbb G_1\times\mathbb G_2\to\mathbb G_T$ be an asymmetric bilinear
map over groups of prime order $p$ with fixed generators, and write $[a]_t$
for the encoding of $a\in\mathbb F_p$ in $\mathbb G_t$, so that
$e([a]_1,[b]_2)=[ab]_T$.
For $\mathbf a\in\mathbb F_p^{N}$ and a key $[\mathbf r]_1\in\mathbb G_1^{N}$,
write $[\mathbf a^{\top}\mathbf r]_1=\sum_i a_i[r_i]_1$ for the vector
Pedersen commitment of $\mathbf a$.
Let $N=2^{\nu}$, and as in \cref{fig:bp-ipa} let $\mathbf v_L,\mathbf v_R$ be
the halves of an even-length vector $\mathbf v$.

\medskip
\textbf{Structured commitment keys.}
The key is drawn with multiplicative structure, either
$\mathcal{PW}_n:\mathbf r=(1,x,x^2,\ldots,x^{N-1})$, a powers-of-tau string,
or
$\mathcal{ML}_{2^\nu}:\mathbf r=(1,x_1,x_2,x_1x_2,\ldots,x_1\cdots x_\nu)$,
the multilinear monomials; the first is the case $x_i=x^{2^{i-1}}$.
The prover holds the full key $[\mathbf r]_1$, and the verifier holds only its
compressed description $[\mathbf x]_2=([x_1]_2,\ldots,[x_\nu]_2)$, of size
$\log N$, together with the current generator $[r]_1$.
Write
\[
\begin{gathered}
    \bigl(\mathsf{pp},[r]_1,[\mathbf r]_1,[\mathbf x]_2\bigr)
    \in\mathcal L_{\mathsf{Com}}
    \iff
    [r_1]_1=[r]_1
    \ \wedge \ 
    [r_{2^{i-1}+j}]_1=x_i[r_j]_1
    \quad
    \forall i\in[\nu],\ j\in\bigl[2^{i-1}\bigr].
\end{gathered}
\]

\medskip
\textbf{Relation.}
\[
\begin{gathered}
\bigl(\mathsf{pp},[r]_1,[s]_1,[\mathbf x]_2,[\mathbf y]_2,
       [\alpha]_1,[\beta]_1,z\bigr)\in\mathcal L_{\mathrm{IP}}
\iff
\exists\,[\mathbf r]_1,[\mathbf s]_1\in\mathbb G_1^{N},\
\mathbf a,\mathbf b\in\mathbb F_p^{N}:\\[2pt]
\bigl(\mathsf{pp},[r]_1,[\mathbf r]_1,[\mathbf x]_2\bigr)
  \in\mathcal L_{\mathsf{Com}}
\ \wedge\
\bigl(\mathsf{pp},[s]_1,[\mathbf s]_1,[\mathbf y]_2\bigr)
  \in\mathcal L_{\mathsf{Com}}
\ \wedge\\[2pt]
[\alpha]_1=[\mathbf a^{\top}\mathbf r]_1
\ \wedge\
[\beta]_1=[\mathbf b^{\top}\mathbf s]_1
\ \wedge\
\mathbf a^{\top}\mathbf b=z .
\end{gathered}
\]

\medskip
\textbf{Reduction.}
One round reduces a statement of size $N=2^\nu$ to one of size $N/2$.
Both parties hold
$(\mathsf{pp},[r]_1,[s]_1,[\mathbf x]_2,[\mathbf y]_2,[\alpha]_1,[\beta]_1,z)$;
the prover also holds $[\mathbf r]_1,[\mathbf s]_1,\mathbf a,\mathbf b$.
\begin{enumerate}[
    label=\arabic*.,
    leftmargin=*,
    itemsep=0.4em,
    topsep=0.3em
]
\item
$\mathcal P$ sends the cross terms
\[
\begin{gathered}
[\alpha_{-1}]_1=[\mathbf a_L^{\top}\mathbf r_R]_1,
\qquad
[\alpha_{1}]_1=[\mathbf a_R^{\top}\mathbf r_L]_1,
\qquad
z_{-1}=\mathbf a_R^{\top}\mathbf b_L,
\\
[\beta_{-1}]_1=[\mathbf b_L^{\top}\mathbf s_R]_1,
\qquad
[\beta_{1}]_1=[\mathbf b_R^{\top}\mathbf s_L]_1,
\qquad
z_{1}=\mathbf a_L^{\top}\mathbf b_R .
\end{gathered}
\]

\item
$\mathcal V$ samples $c\sample\mathbb F_p^{\times}$ and sends it to
$\mathcal P$.

\item
$\mathcal P$ folds the witness and both keys,
\[
\begin{gathered}
\mathbf a'=c\,\mathbf a_L+c^{2}\mathbf a_R,
\qquad
\mathbf b'=c^{-1}\mathbf b_L+c^{-2}\mathbf b_R,
\\
[\mathbf r']_1=c^{-1}[\mathbf r_L]_1+c^{-2}[\mathbf r_R]_1,
\qquad
[\mathbf s']_1=c\,[\mathbf s_L]_1+c^{2}[\mathbf s_R]_1,
\end{gathered}
\]
and sends $[r']_1$ and $[s']_1$, the first components of $[\mathbf r']_1$ and
$[\mathbf s']_1$, which are the generators of the reduced keys.

\item
$\mathcal V$ checks the two pairing equations
\[
    e\bigl([r']_1-c^{-1}[r]_1,[1]_2\bigr)
    =e\bigl(c^{-2}[r]_1,[x_\nu]_2\bigr),
    \qquad
    e\bigl([s']_1-c\,[s]_1,[1]_2\bigr)
    =e\bigl(c^{2}[s]_1,[y_\nu]_2\bigr),
\]
and aborts if either fails.
This replaces recomputing the folded key.

\item
Both compute
\[
\begin{gathered}
[\mathbf x']_2=([x_i]_2)_{i<\nu},
\qquad
[\mathbf y']_2=([y_i]_2)_{i<\nu},
\\
[\alpha']_1=c^{-1}[\alpha_{-1}]_1+[\alpha]_1+c\,[\alpha_{1}]_1,
\qquad
[\beta']_1=c\,[\beta_{-1}]_1+[\beta]_1+c^{-1}[\beta_{1}]_1,
\\
z'=c\,z_{-1}+z+c^{-1}z_{1},
\end{gathered}
\]
and recurse on
$(\mathsf{pp},[r']_1,[s']_1,[\mathbf x']_2,[\mathbf y']_2,
  [\alpha']_1,[\beta']_1,z')$.
\end{enumerate}

\medskip
\textbf{Verification.}
After $\nu$ rounds both vectors have length one.
$\mathcal P$ sends $a,b\in\mathbb F_p$ and $\mathcal V$ accepts if and only if
\[
    [\alpha]_1=a[r]_1,
    \qquad
    [\beta]_1=b[s]_1,
    \qquad
    ab=z .
\]

\end{minipage}}
\caption{The publicly verifiable inner-product argument of Daza, R\`afols and
Zacharakis~\cite{drz}, in interactive form; the challenges are derived by
Fiat--Shamir in $\prv$ (\cref{app:construction}).
The verifier never reads the full commitment key.
It holds the $\log N$ elements $[\mathbf x]_2,[\mathbf y]_2$ and runs in logarithmic time.}
\label{fig:pv-ipa}
\end{figure}

\subsection{Proof of \texorpdfstring{\cref{thm:ipa-nonamort}}{the Non-Amortizability Theorem}}
\label{sec:ipa-nonamort-proof}

\begin{proof}
The proof follows that of \cref{prop:bp-nonamort}.
In particular, the counting argument used there does not depend on how the
generators are sampled.
It bounds the number of descriptions of a straight-line program of $Q$ group
operations over $M$ handles by $(2M^2)^{Q}M^{2m}$, and the only property of
$M$ it uses is $\log M=O(\log\lambda)$.
A structured key, and any polynomial number of elements published with it,
only enlarge $M$.
The differences are the following.

\begin{itemize}[leftmargin=*]
\item \emph{Model.}
The adversary works in the generic bilinear group model and additionally has
$e:\mathbb G_1\times\mathbb G_2\to\mathbb G_T$.
The messages whose cost is bounded lie in $\mathbb G_1$, a pairing outputs an
element of $\mathbb G_T$, and no operation maps $\mathbb G_T$ back to
$\mathbb G_1$, so pairings do not weaken the count.
Handles in $\mathbb G_2$ and $\mathbb G_T$ enter only through $M$.

\item \emph{Coupling.}
The real and symbolic executions are coupled by bounding the probability that
distinct formal coefficient vectors evaluate to the same group element.
For uniform generators this is at most $1/(p-1)$ per pair.
Under $\mathcal{PW}_N$ a collision between $c\ne d$ makes the trapdoor a root of
$\sum_i(c_i-d_i)X^{i-1}$, a nonzero polynomial of degree at most $N-1$, so the
bound becomes $(N-1)/p$; under $\mathcal{ML}_N$ it is $r/p$ by
Schwartz--Zippel.
Both remain negligible after the union bound, since $\log p=\omega(\log\lambda)$
makes $p$ superpolynomial, and the assumption invoked is the Find-Rep assumption
of the binding sketch in \cref{app:construction}.

\item \emph{Trapdoor.}
The trapdoor is sampled by $\setup$ and never given to the adversary, and this
is necessary.
An adversary holding it computes every multi-scalar multiplication
$[\vec a^{\top}\mathbf r]_1$ with $\Theta(N)$ field operations and one
exponentiation, so every round of every proof costs $O(\log p)$ group
operations and the conclusion fails.

\item \emph{Prescribed coefficients and fresh challenges.}
The folding of \cref{fig:pv-ipa} differs from that of \cref{fig:bp-ipa}, but the
two facts the counting needs carry over.
First, the pairing checks of step~4 and the final check force the folded
generators $[r']_1,[s']_1$ to be the honest ones, and the updates of
$[\alpha]_1,[\beta]_1,z$ force the cross terms to have their prescribed formal
coefficients, each except with probability $O(1/p)$ per round by
Schwartz--Zippel in the challenge.
Second, fix the history before the challenge $c_i$ of round $i<r$.
Expanding the folds $\vec a'=c\,\vec a_L+c^2\vec a_R$ and
$[\mathbf r']_1=c^{-1}[\mathbf r_L]_1+c^{-2}[\mathbf r_R]_1$ shows that each
cross term of round $i+1$ has the form $A\,c_i+B+C\,c_i^{-1}$, with $A,B,C$
fixed by the history.
For instance $[\alpha'_{-1}]_1$ has $A=[\vec a_{RL}^{\top}\mathbf r_{LR}]_1$ and
$C=[\vec a_{LL}^{\top}\mathbf r_{RR}]_1$, where the double subscripts denote
quarters.
The cross terms are nonconstant in $c_i$ unless the current vectors vanish, and
the vector $\mathbf 1$ folds to a nonzero vector except with negligible
probability, so the $\beta$ terms are nonconstant even when the message is
zero.
Each equation $A\,c_i+B+C\,c_i^{-1}=z$ has at most two solutions, so the target
has min-entropy at least $\log(p-1)-1$ before $c_i$ is sampled.
This replaces \eqref{eq:ipa-lb-laurent} and the bound of four solutions in the
reconstruction step, which becomes a bound of two.
The folded generators $[r']_1,[s']_1$ sent after each challenge, including the
last, are further challenge-dependent messages.
We do not count them, so the overhead $(r-1)\log p/\log\lambda$ is conservative.

\item \emph{Signature.}
Signing adds group operations to each proof and removes none, so it does not
affect a lower bound on their number.

\item \emph{Support.}
The additive $N$ term of \cref{thm:ipa-lower-bound} does not transfer, since
precomputed elements of full support break its induction.
It is paid once per generator set and never enters $\Delta$.
\end{itemize}
Nothing else in the proof of \cref{prop:bp-nonamort} refers to the
distribution of the commitment key or to the folding rule, and the global
counting then gives $Q=\Omega(n\Delta)$ as before.
\end{proof}
 \section{Our Construction from Inner-Product Arguments and Signatures}
\label{app:construction}

This appendix gives the construction deferred from \cref{sec:ipa-construction}
in full.
We first state the syntax of each ingredient and the exact property we need
from it, and then give the construction.
The protocol is that of TRACE~\cite{trace}, modified as described in
\cref{sec:ipa-construction}.
We instantiate it with a Pedersen vector commitment and with the inner-product
argument of Daza, R\`afols and Zacharakis~\cite{drz} (\cref{fig:pv-ipa}), not
with that of Bulletproofs.

Throughout, $N=2^r$ is the padded vector length and $\ro\sample\roSet$ is the
random oracle of \cref{sec:mechanism}.
$\setup$ and $\KG$, and hence $\mathsf{Gen}$ and $\mathsf{Sig.KG}$, do not query
it.

\subsection{Ingredients and the Properties We Need}
\label{app:ingredients}

\paragraph{Structured keys and the Pedersen commitment.}
A key generator $\mathsf{Gen}(1^\lambda,N)$ samples asymmetric bilinear group
parameters $\mathsf{pp}=(p,\mathbb G_1,\mathbb G_2,\mathbb G_T,e,[1]_1,[1]_2)$
and two independent vectors $\mathbf r,\mathbf s\in\mathbb F_p^N$ from a key
distribution $\mathcal D_N$.
Examples are the powers distribution $\mathcal{PW}_N$, with
$\mathbf r=(1,x,\ldots,x^{N-1})$ for a uniform $x$, and the multilinear
distribution $\mathcal{ML}_N$ of~\cite{drz}.
It outputs a prover key
$\mathsf{ck}_{\mathsf P}=(\mathsf{pp},[\mathbf r]_1,[\mathbf s]_1,\mathsf{ck}_{\mathsf V})$
and a verifier key $\mathsf{ck}_{\mathsf V}$ holding $\mathsf{pp}$ and the
compressed descriptions $[\mathbf x]_2,[\mathbf y]_2$ of the two keys, of size
$O(\log N)$.
The Pedersen vector commitment to $\vec a\in\mathbb F_p^N$ is
$[\vec a^{\top}\mathbf r]_1$.

\begin{definition}[Find-Rep relative to $\mathsf{Gen}$]
\label{def:findrep}
For all PPT adversaries $\adv$ and all $\lambda\in\mathbb N$:
\[ \multilinePr{
    & \protect\vec a\neq\mathbf 0\ \wedge\ [\protect\vec a^{\top}\mathbf r]_1=[0]_1
}{
    & (\mathsf{ck}_{\mathsf P},\mathsf{ck}_{\mathsf V})\gets\mathsf{Gen}(1^\lambda,N)\\
    & \protect\vec a\gets\adv\left( \mathsf{ck}_{\mathsf P} \right)
} \leq \negl \]
\end{definition}

For $\mathcal{PW}_N$ and $\mathcal{ML}_N$ this is the asymmetric
$\mathcal D_N$-Find-Rep assumption of~\cite{drz}, which reduces to $q$-DLOG,
respectively DLOG, in the asymmetric bilinear group.

\paragraph{Labelled inner-product arguments.}
For a prover key $\mathsf{ck}_{\mathsf P}=(\mathsf{pp},[\mathbf r]_1,[\mathbf s]_1,\mathsf{ck}_{\mathsf V})$,
the inner-product relation is
\[
  \rel_{\mathsf{IP}}[\mathsf{ck}_{\mathsf P}]
  =\left\{
    \bigl(([\alpha]_1,[\beta]_1,z),\,(\vec a,\mathbf b)\bigr)
    \;\middle|\;
    [\alpha]_1=[\vec a^{\top}\mathbf r]_1,\;
    [\beta]_1=[\mathbf b^{\top}\mathbf s]_1,\;
    \langle\vec a,\mathbf b\rangle=z
  \right\},
\]
with $\vec a,\mathbf b\in\mathbb F_p^N$ and $z\in\mathbb F_p$.
This is the relation $\mathcal L_{\mathrm{IP}}$ of \cref{fig:pv-ipa} with the
keys fixed by $\mathsf{ck}_{\mathsf P}$, so its key-structure conditions
$\mathcal L_{\mathsf{Com}}$ hold by construction of $\mathsf{Gen}$.
A labelled inner-product argument is a pair of oracle algorithms
$\left( \mathsf P,\mathsf V \right)$ over the keys of $\mathsf{Gen}$:
\begin{itemize}
\item $\mathsf P^\ro(\mathsf{ck}_{\mathsf P},\ell,X,w)$ outputs a proof
  $\pi_{\mathsf{ip}}$ for the statement $X$ with witness $w$ under the label
  $\ell\in\bits^*$;
\item $\mathsf V^\ro(\mathsf{ck}_{\mathsf V},\ell,X,\pi_{\mathsf{ip}})$ outputs a
  bit.
\end{itemize}
Both query the random oracle only on inputs that begin with $\ell$, so arguments
under distinct labels use independent oracles.
We need three properties.

\begin{definition}[Properties of the inner-product argument]
\label{def:ip-props}
\leavevmode
\begin{enumerate}[label=(IP\arabic*),leftmargin=*]
\item \emph{Perfect completeness.}
For all $(\mathsf{ck}_{\mathsf P},\mathsf{ck}_{\mathsf V})$ in the support of
$\mathsf{Gen}(1^\lambda,N)$, all $\ro\in\roSet$, all $\ell\in\bits^*$ and all
$(X,w)\in\rel_{\mathsf{IP}}[\mathsf{ck}_{\mathsf P}]$,
$\mathsf V^\ro(\mathsf{ck}_{\mathsf V},\ell,X,
 \mathsf P^\ro(\mathsf{ck}_{\mathsf P},\ell,X,w))=1$.
\label{ip:completeness}
\item \emph{Black-box knowledge soundness.}
There is an expected polynomial-time extractor $\mathcal E$ such that for all
PPT adversaries $\adv$ and all $\lambda\in\mathbb N$:
\[ \multilinePr{
    & \mathsf V^\ro(\mathsf{ck}_{\mathsf V},\ell,X,\pi_{\mathsf{ip}})=1\ \wedge\\
    & \quad (X,w)\notin\rel_{\mathsf{IP}}[\mathsf{ck}_{\mathsf P}]
}{
    & \ro\sample\roSet;\ (\mathsf{ck}_{\mathsf P},\mathsf{ck}_{\mathsf V})\gets\mathsf{Gen}(1^\lambda,N)\\
    & \rho\sample\bits^{\poly}\\
    & (\ell,X,\pi_{\mathsf{ip}})\gets\adv^\ro(\mathsf{ck}_{\mathsf P};\rho)\\
    & w\gets\mathcal E^{\adv,\ro}(\mathsf{ck}_{\mathsf P},\ell,X,\rho)
} \leq \negl \]
where $\mathcal E$ may run $\adv$ on random tapes of its choice and answer each
of its random oracle queries with a value of its choice.
\label{ip:ks}
\item \emph{Succinctness.}
$\abs{\mathsf{ck}_{\mathsf V}}$ and $\abs{\pi_{\mathsf{ip}}}$ are
$\poly\cdot O(\log N)$, and $\mathsf V$ runs in time
$\poly\cdot O\left( \abs{\ell}+\log N \right)$.
\label{ip:succinct}
\end{enumerate}
\end{definition}

\paragraph{Signatures.}
A signature scheme
$\Sigma=\left( \mathsf{Sig.KG},\mathsf{Sig.Sign},\mathsf{Sig.Vfy} \right)$ has
$\mathsf{Sig.KG}(1^\lambda)\to(\mathsf{ssk},\mathsf{svk})$,
$\mathsf{Sig.Sign}^\ro(\mathsf{ssk},m)\to\sigma$ and
$\mathsf{Sig.Vfy}^\ro(\mathsf{svk},m,\sigma)\to\bits$.

\begin{definition}[Properties of the signature scheme]
\label{def:sig-props}
\leavevmode
\begin{enumerate}[label=(S\arabic*),leftmargin=*]
\item \emph{Correctness.}
For all $(\mathsf{ssk},\mathsf{svk})$ in the support of $\mathsf{Sig.KG}(1^\lambda)$,
all $\ro\in\roSet$ and all $m\in\bits^*$,
$\mathsf{Sig.Vfy}^\ro(\mathsf{svk},m,\mathsf{Sig.Sign}^\ro(\mathsf{ssk},m))=1$.
\label{sig:correct}
\item \emph{Existential unforgeability under chosen-message attack.}
For all PPT adversaries $\adv$ and all $\lambda\in\mathbb N$:
\[ \multilinePr{
    & \mathsf{Sig.Vfy}^\ro(\mathsf{svk},m^*,\sigma^*)=1\ \wedge\\
    & \quad m^*\notin\mathcal Q_\Sigma
}{
    & \ro\sample\roSet\\
    & (\mathsf{ssk},\mathsf{svk})\gets\mathsf{Sig.KG}(1^\lambda)\\
    & (m^*,\sigma^*)\gets\adv^{\ro,\,\mathsf{Sig.Sign}^\ro(\mathsf{ssk},\cdot)}(\mathsf{svk})
} \leq \negl \]
where $\mathcal Q_\Sigma$ is the set of messages $\adv$ queried to its signing
oracle.
\label{sig:euf}
\end{enumerate}
\end{definition}

\subsection{The Construction}
\label{app:construction-formal}

\Cref{fig:construction} gives $\Pi_{\mathsf{IPA}}$ in full.
The message space is $\mathcal M_\lambda=\mathbb F_p^N$ of traces.
The inner-product argument is run on the trace and the public all-ones vector,
so the only secret a party holds is its signing key.

\begin{figure}[tp]
\centering
\fbox{\begin{minipage}{\dimexpr\linewidth-2\fboxsep-2\fboxrule\relax}
\small
\begin{minipage}[t]{0.47\linewidth}
\underline{$\setup(1^\lambda)$}\\
$(\mathsf{ck}_{\mathsf P},\mathsf{ck}_{\mathsf V})\gets\mathsf{Gen}(1^\lambda,N)$\\
$[S]_1\gets\sum_{i=1}^{N}[s_i]_1$\\
return $\pcrs=(\mathsf{ck}_{\mathsf P},[S]_1)$, $\vcrs=(\mathsf{ck}_{\mathsf V},[S]_1)$
\\[6pt]
\underline{$\KG(\pcrs)$}\\
$(\mathsf{ssk},\mathsf{svk})\gets\mathsf{Sig.KG}(1^\lambda)$\\
return $\sk=\mathsf{ssk}$, $\pk=\mathsf{svk}$
\\[6pt]
\underline{$\Com(\pcrs,E)$}\\
return $[E^{\top}\mathbf r]_1$
\end{minipage}\hfill
\begin{minipage}[t]{0.50\linewidth}
\underline{$\prv^\ro(\pcrs,\sk,E,\nonce)$}\\
$\cm\gets\Com(\pcrs,E)$;\ $v\gets\langle E,\mathbf 1\rangle$\\
$X\gets(\cm,[S]_1,v)$\\
$\theta\gets\ro(\mathsf{claim},\cm,\pk,v,\nonce)$\\
$\pi_{\mathsf{ip}}\gets\mathsf P^\ro(\mathsf{ck}_{\mathsf P},\theta,X,(E,\mathbf 1))$\\
$\sigma\gets\mathsf{Sig.Sign}^\ro(\mathsf{ssk},(\cm,\nonce,v,\pi_{\mathsf{ip}}))$\\
return $\pi=(v,\pi_{\mathsf{ip}},\sigma)$
\\[6pt]
\underline{$\vfy^\ro(\vcrs,\pk,\cm,\nonce,\pi)$}\\
parse $\pi=(v,\pi_{\mathsf{ip}},\sigma)$ and $\pk=\mathsf{svk}$\\
if $\mathsf{Sig.Vfy}^\ro(\mathsf{svk},(\cm,\nonce,v,\pi_{\mathsf{ip}}),\sigma)=0$: return $0$\\
$X\gets(\cm,[S]_1,v)$\\
$\theta\gets\ro(\mathsf{claim},\cm,\pk,v,\nonce)$\\
return $\mathsf V^\ro(\mathsf{ck}_{\mathsf V},\theta,X,\pi_{\mathsf{ip}})$
\end{minipage}
\end{minipage}}
\caption{The argument with registered provers $\Pi_{\mathsf{IPA}}$, built from
a labelled inner-product argument $(\mathsf{Gen},\mathsf P,\mathsf V)$ and a
signature scheme $\Sigma$.
The key enters the inner-product argument only through the label $\theta$,
which hashes $\pk$, and the verifier receives no opening of $\cm$.}
\label{fig:construction}
\end{figure}

The statement is well formed.
The keys are well structured by construction of $\setup$, and
\[
    [E^{\top}\mathbf r]_1=\cm,
    \qquad
    [\mathbf 1^{\top}\mathbf s]_1=[S]_1,
    \qquad
    \langle E,\mathbf 1\rangle=v,
\]
so $(X,(E,\mathbf 1))\in\rel_{\mathsf{IP}}[\mathsf{ck}_{\mathsf P}]$.
The eligibility test of \cref{alg:both} hashes $\pi_{\mathsf{ip}}$, which is
a deterministic function of the trace, the key and the block, and not $\sigma$,
so that a randomized signature cannot be resampled to win.

\subsection{Properties}
\label{app:construction-properties}

We sketch why $\Pi_{\mathsf{IPA}}$ satisfies the properties of \cref{def:dp-arg}.

\emph{Completeness} follows directly from the well-formedness above, from
perfect completeness of $(\mathsf{Gen},\mathsf P,\mathsf V)$~\cite[Thm.~4]{drz},
and from correctness of $\Sigma$.
The label $\theta$ separates the oracle queries of different claims, so the
Fiat--Shamir compiled argument is complete for the honestly derived statement.

\emph{Binding} is binding of the Pedersen vector commitment under the structured
key.
Two openings $E\ne E'$ of one commitment give the nonzero vector
$E-E'$ with $[(E-E')^{\top}\mathbf r]_1=[0]_1$, which breaks
the $\mathcal{PW}_N$-Find-Rep assumption, respectively the
$\mathcal{ML}_N$-Find-Rep assumption.
These reduce to $q$-DLOG, respectively DLOG, in the asymmetric bilinear
group~\cite{drz}.

\emph{Succinctness} follows directly from the properties of the Pedersen
commitment and of the inner-product argument.
The commitment is a single element of $\mathbb G_1$.
The proof consists of $v$, of $\pi_{\mathsf{ip}}$, which has $O(\log N)$ elements
of $\mathbb G_1$ and $\mathbb F_p$~\cite[Thm.~4]{drz}, and of one signature.
The verifier checks one signature, hashes the nonce into $\theta$, and runs
$\mathsf V$, which reads the $O(\log N)$ elements of $\mathsf{ck}_{\mathsf V}$
and performs $O(\log N)$ group operations and pairings.

\emph{Knowledge soundness} follows from the analysis of Daza et
al.~\cite[Thm.~4]{drz}, which shows that their argument has witness-extended
emulation under the Find-Rep assumption above.
The emulation is black-box: the extractor rewinds the prover to collect a tree of
accepting transcripts over fresh challenges, first recovers the folded keys and
then, as in the extractor of Bootle et al.~\cite{bootle2016ipa}, the openings.
After Fiat--Shamir, the extractor obtains the fresh challenges by reprogramming
the random oracle under the label $\theta$, and the analysis of Fiat--Shamir for
multi-round special-sound protocols~\cite{TCC:AttFehKlo22} preserves a
black-box extractor with a polynomial loss.
Given an adversary that outputs an accepting $(\cm,\nonce,\pi)$, the extractor
therefore recovers a witness whose first component $E$ satisfies
$[E^{\top}\mathbf r]_1=\cm$, and $E$ is an opening of $\cm$ as \cref{def:ks}
requires.
The signature plays no part here, since the adversary of that definition holds
$\sk$ and can sign every transcript the rewinding produces.

\emph{Designated proving} follows from unforgeability of $\Sigma$.
Let $\adv$ output an accepting $(\cm,\nonce,\pi)$ with $(\cm,\nonce)\notin\mathcal Q$.
Every proof returned by $\prv^\ro_\sk$ carries a signature on a message that
begins with a pair $(\Com(\pcrs,E'),\nonce')\in\mathcal Q$, so the message
$(\cm,\nonce,v,\pi_{\mathsf{ip}})$ signed by $\sigma$ was never signed by the
oracle, and $\sigma$ is a forgery.
The reduction receives $\mathsf{svk}$ and a signing oracle, and answers each
query to $\prv^\ro_\sk$ by computing $v$ and $\pi_{\mathsf{ip}}$ honestly,
which requires no secret, and asking its oracle for $\sigma$.
Its simulation is perfect, and designated proving rests on the signature
alone.
 \section{Costing a TRACE Claim on Arbitrum One}
\label{app:costing}

This appendix carries the hardware, the benchmarks, and the arithmetic behind the
costs reported in \cref{sec:example}.

\paragraph{Node cost.}
An Arbitrum full node holds a snapshot of $2.4$\,TB and needs headroom above
it. So the minimum hardware for a full node is eight cores, 64\,GB of memory, and 4\,TB of
NVMe storage.
A managed provider charges about a dollar an hour for such a
node~\cite{chainstack}, so
\begin{equation}
\label{eq:example_c}
  c \;=\; \$1.00 \text{ per hour} \;=\; \$730 \text{ per month},
\end{equation}
where $730$ is the number of hours in an average month.
Self-hosting the equivalent machine costs about \$0.41 per hour and would raise
the bound of \cref{eq:bound} from two attention nodes to five.

\paragraph{Cost of one attention sybil.}
An attention proof is calculated once per block, and Arbitrum produces a block every $0.25$
seconds. So, an identity calculates 14,400 proofs per hour.
Producing one TRACE proof costs about $4N$ points of multi-scalar multiplication,
where $N$ is the trace length.
Trace length is a design choice for the rollup operator. A shorter trace lowers
$d_a$, the proving cost per identity, but also lowers the sustainable execution
diversity:
\[
  A \leq 1 + \frac{2d_a}{c}.
\]
A longer trace raises both the cost of deploying identities and the bound on
$A$. The rollup operator therefore trades proving work against execution
diversity.
Half of that is the trace-side commitments and half is the folding of the
commitment key, each of which sums to $2N$
(see \cref{app:cost}).
One claim is therefore four multi-scalar multiplications of size $N$.

\Cref{tab:msm} gives measured times for one such multiplication on BN254.

\begin{table}[t]
\centering
\caption{Time for one multi-scalar multiplication of $N$ points on BN254,
measured on a 12-core AMD Ryzen 9 9900X. The time is close to linear in $N$,
since the window term in Pippenger's algorithm moves slowly over this range.}
\label{tab:msm}
\begin{tabular}{rr}
\toprule
$N$ & time \\
\midrule
$2^{18}$ & 50\,ms \\
$2^{19}$ & 100\,ms \\
$2^{20}$ & 180\,ms \\
$2^{21}$ & 360\,ms \\
\bottomrule
\end{tabular}
\end{table}

At $N = 2^{20}$ one proof costs $4 \times 180 = 720$\,ms of that machine, and
an identity therefore consumes
\[
  \frac{14{,}400 \times 0.72}{3600} \;=\; 2.9 \text{ machine-hours per hour},
\]
where the numerator is the machine-seconds spent in an hour and $3600$ converts
to hours.
Therefore, an identity needs about three of these machines running without pause.

We price that machine as the closest AWS instance by thread count, a
\texttt{c8a.6xlarge} with 24 virtual cores, at \$226 per month under a
three-year savings plan, or \$0.31 per hour.
For illustration, a one-ETH registration deposit costs \$0.014 per hour at
\$2{,}500 per ETH and the same interest rate.
The per-identity cost is therefore
\begin{equation}
\label{eq:example_d_a}
  d_a \;=\; 2.9 \times 0.31 + 0.014 \;=\; \$0.91 \text{ per hour},
\end{equation}
where the first term is proving and the second is the deposit. The latter is a negligibile cost and
the exact deposit size is not very important.

Trace length is the only parameter of the mechanism that moves $d_a$, and
\cref{tab:sweep} sweeps it.

\begin{table}[t]
\centering
\caption{Cost per identity and the sustainable number of attention nodes as the
trace length varies. Instances are \texttt{c8a.6xlarge} at \$0.31 per hour and
$c = \$1.00$ per hour throughout. 29 attention nodes requires 46
instances per identity, which is \$10{,}400 per month of proving.}
\label{tab:sweep}
\begin{tabular}{|r|r|r|r|r|}
\toprule
$N$ & per claim & instances & $d_a$ per hour & $\lfloor A \rfloor$ \\
\midrule
$2^{18}$ & 0.20\,s  & 0.8  & \$0.26  & 1  \\
$2^{20}$ & 0.72\,s  & 2.9  & \$0.91  & 2  \\
$2^{21}$ & 1.44\,s  & 5.8  & \$1.81  & 4  \\
$2^{22}$ & 2.88\,s  & 11.6 & \$3.61  & 8  \\
$2^{24}$ & 11.5\,s  & 46   & \$14.27 & 29 \\
\bottomrule
\end{tabular}
\end{table}

Lowering the rate at which claims win does not reduce $d_a$ under TRACE.
Every identity proves on every block whether or not it wins. The winning rate
sets how often a claim is submitted, not how often a proof is computed.
In Proof of Diligence, the keyed-output test samples which blocks require a
proof. This reduces $d_a$, and therefore lowers $A$ rather than raising it.
TRACE proves on every block, so this sampling option is unavailable in the
main instantiation.

\paragraph{The price of diversity.}
Another way to look at \cref{eq:bound} is that each additional attention node requires
$d_a$ to rise by $c/2$.
At these above costs that is about three more instances per identity for every two
nodes added. The work those instances do is useless, since it exists to make a sybil expensive
rather than to produce any useful computation.
Twenty-nine attention nodes therefore costs \$10{,}400 per month per identity,
and \$301{,}000 per month across the chain.

The above price follows from insisting on non-decreasing reward functions.
Under such a reward, an operator gains from adding identities, so the only way
to stop it is to make each identity expensive. A reward that decreases past the
target changes this trade-off. It can discourage extra identities without
increasing $d_a$, but it shifts the constraint to the reward curve and the
target population.

\subsection{Reward-curve tradeoff}
\label{app:reward-tradeoff}

For the example in \cref{sec:tradeoff}, the total reward pool is $\$90$ and the
target is $30$ identities. With one identity per operator, each nets
$3-0.91-1=1.09$. A deviating operator with two identities nets
$2\times3-2\times0.91-1=3.18$. If it drops one, the count falls to $29$, the
pool rises to $\$145$, and its payoff is $5-0.91-1=3.09$. If it adds one, its
payoff is $3\times1-3\times0.91-1=-0.73$, while a new operator entering gets
$1-0.91-1=-0.91$. Thus no deviation is profitable, and the chain receives half
the number of operators.

Dropping an identity raises every remaining identity's payment by $\beta$, so an
operator holding $q$ identities gains $\beta(q-1)$ and loses the dropped
identity's payment. Adding a second identity at $q=1$ leaves both earning
$p_a-\beta$, so one identity is preferable when $\beta>p_a/2$. Dropping one at
$q=2$ takes the count to $A-1$ and pays the remaining identity $\beta$ more, so
dropping is preferable when $\beta>p_a$. More generally, dropping is preferred
when $\beta>p_a/(q-1)$, so any $\beta>p_a$ makes $q>1$ unstable.

The reward pool reaches zero at $A+(d_a+p_a)/\beta$. At $\beta=2$, one extra
identity puts every other operator at $-\$0.91$, and two extra identities exhaust
the pool. Let $m=(d_a+p)/\beta$. Since
$\beta>\max\{p/2,p-c\}$, we have
\[
  m<f(p_a),\qquad
  f(p_a):=\frac{d_a+p_a}{\max\{p_a/2,p_a-c\}}.
\]
The two terms in the maximum cross at $p=2c$, because
$p/2\geq p-c$ exactly when $p\leq2c$. Below that point,
$f(p)=2+2d_a/p$, which decreases with $p$. Above it,
$f(p)=(d_a+p)/(p-c)$, whose derivative is
$-(c+d_a)/(p-c)^2<0$. The two expressions agree at $p=2c$, so $f$ decreases
throughout $p>c$ and is largest as $p$ approaches $c$. Thus the gap above $A$
is at most $2+2d_a/c$, which is \cref{eq:bound} again.

\section{Proof of Proposition~\ref{prop:necessity_of_condition}}
\label{app:prop_necessity_of_condition}

\begin{proof}
    Let $p:=\frac{g_a(A)}{A}-d_a$ be the attention game player's profit, ignoring entry costs. Individual rationality requires $p\geq c$. Let $\delta:=g_a(A+1)-g_a(A)$. Since the current player should prefer having one sybil to having $2$ sybils, we have the following inequality: 

    \begin{equation}
        p_a\geq \frac{2g_a(A+1)}{A+1}-2d_a.
    \end{equation}

    Substituting $g_a(A)=A(d_a+p)$ (from the definition of $p$) and $g_a(A+1)=g_a(A)+\delta$ simplifies the inequality to 

    \begin{equation}
        p_a(A-1)+2\delta \leq 2d_a, 
    \end{equation}
    Therefore, for non-decreasing $g_a$, we have $\delta\geq 0$, and 

    \begin{equation}
        c<\frac{2d_a-2\delta}{A-1}\leq \frac{2d_a}{A-1}.
    \end{equation}
    \qed
\end{proof}

\section{Equilibrium with Multiple Attention Sybils}
\label{app:multi-sybil}

The condition~\eqref{condition_for_good_equilibrium} can be weakened when each
attention player is allowed to deploy more than one sybil. Suppose each of $M$
players deploys $s_a$ sybils, so that $M s_a=A$. We allow fractional values for
players or sybils in this calculation. The necessary condition is
\begin{equation}
\label{condition_sa_sybils}
  c < \frac{d_a s_a(s_a+1)}{A-s_a}
    = \frac{d_a(s_a+1)}{M-1}.
\end{equation}
For $s_a=1$, this is exactly
\eqref{condition_for_good_equilibrium}.

\begin{proposition}
\label{prop:necessity_sa_sybils}
Condition~\eqref{condition_sa_sybils} is necessary for an equilibrium with $M$
players each having $s_a$ sybils in the attention game.
\end{proposition}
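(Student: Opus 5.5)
The plan mirrors the proof of \cref{prop:necessity_of_condition}, replacing the deviation ``one sybil to two'' by ``$s_a$ sybils to $s_a+1$''. Throughout, $g_a$ is non-decreasing, as in \cref{prop:necessity_of_condition}. Let
\[
p:=\frac{g_a(A)}{A}-d_a
\]
be the per-sybil profit at the candidate equilibrium, ignoring entry costs. Each player then earns $s_a p - c$, and individual rationality \eqref{individual_rationality_attention} requires $s_a p\ge c$.

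\textbf{Step 1: the deviation constraint.} The best-response condition \eqref{best_response_attention}, specialised to $q=s_a+1$ with the other $M-1$ players fixed at $s_a$, gives $D_a(s_a\mid M,s_a)\ge D_a(s_a+1\mid M,s_a)$, that is,
\[
s_a p\;\ge\;(s_a+1)\left(\frac{g_a(A+1)}{A+1}-d_a\right).
\]
Write $\delta:=g_a(A+1)-g_a(A)\ge0$ and substitute $g_a(A)=A(d_a+p)$. Multiplying through by $A+1$ and expanding, the terms $s_a A p$ and the $d_a$ terms should cancel partially, leaving
\[
p\,(A-s_a)+(s_a+1)\delta\;\le\; d_a\,(s_a+1).
\]
I expect this to reduce to $p(A-1)+2\delta\le 2d_a$ when $s_a=1$, which serves as a sanity check against \cref{app:prop_necessity_of_condition}.

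\textbf{Step 2: combine with individual rationality.} Dropping $\delta\ge0$ gives $p\le d_a(s_a+1)/(A-s_a)$. Multiplying by $s_a$ and using $c\le s_a p$ yields
\[
c\le \frac{d_a s_a(s_a+1)}{A-s_a}=\frac{d_a(s_a+1)}{M-1},
\]
where the last equality uses $A=Ms_a$.

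\textbf{Main obstacle.} The difficulty is the strict inequality in \eqref{condition_sa_sybils}. The earlier proposition also writes $<$, even though its derivation gives $\le$ unless $\delta>0$. I would follow that convention and note that strictness needs either $\delta>0$ or a strict best-response preference. I would also note that fractional $s_a$ and $M$ are permitted, so $q=s_a+1$ is treated formally, exactly as the statement allows.
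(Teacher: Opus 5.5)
Your proof is correct and follows the paper's route: the $s_a\to s_a+1$ deviation, monotonicity of $g_a$ to bound the deviator's payoff, then individual rationality. The paper substitutes $g_a(A+1)\ge g_a(A)$ directly where you carry $\delta$, and it gets the strict inequality by taking individual rationality in the strict form $P_a=s_a p>c$; with the non-strict form \eqref{individual_rationality_attention} you correctly obtain only $\le$.
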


\begin{proof}
Let $P_a$ be each player's profit at the target,
$P_a:=s_a\left(\frac{g_a(A)}{A}-d_a\right)$. Individual rationality requires
$P_a>c$. Since $g_a$ is non-decreasing,
$g_a(A+1)\geq g_a(A)$. If a player adds one sybil, its payoff is at least
$(s_a+1)\left(\frac{g_a(A)}{A+1}-d_a\right)$. Requiring this to be no more than
$P_a$ gives $P_a(M-1)\leq d_a(s_a+1)$. Combining this with $P_a>c$ yields
\[
  c < \frac{d_a(s_a+1)}{M-1},
\]
which is equivalent to~\eqref{condition_sa_sybils}.
\end{proof}
 \section{A preprocessing variant: why key derivation alone is not enough}
\label{app:keyed}

This appendix specifies the strengthening discussed in
\cref{sec:variant-limitation}, and gives the preprocessing argument that limits
what it delivers.

Its cause is the vector the trace is folded against.
The main construction folds the trace against the all-ones vector, which is the
same for every identity, so the vector side of each round costs nothing per
identity.
A natural strengthening derives $\mathbf b$ from the registered key, fixes it at
registration, and certifies once with a succinct zero-knowledge argument that the
derivation was carried out correctly.
The claim algebra changes only in where the verifier finds the commitment to
$\mathbf b$: in the registered key rather than in the reference string.
Each round then folds a vector belonging to one identity alone.

That instantiation does not deliver a recurring cost proportional to the trace.
Since $\mathbf b$ is fixed at registration, the round-$t$ cross term
$[\mathbf b_L^{\top}\mathbf s_R]_1$ of \cref{fig:pv-ipa} decomposes into sums
\[
  T_{\alpha\beta}
  =\Bigl[\sum_j \mathbf b_{\tau(\beta,j)}^{\top}\mathbf s_{\sigma(\alpha,j)}\Bigr]_1,
\]
where $\sigma$ and $\tau$ index the blocks of the key $[\mathbf s]_1$ and of the
secret vector that round $t$ combines.
The challenge coefficients factor out of these products.
They therefore do not depend on the block, and an operator computes them once per
identity.
Preprocessing to depth $k$ costs $N(2^{k}-1)/2$ group operations and stores
$(4^{k}-1)/3$ elements, after which the recurring secret-side work is
$(4^{k}-1)/3+N/2^{k-1}$.
Balancing the two terms gives $k=\left( \log_2N+1 \right)/3$ and a recurring
cost of $\tilde O\left( N^{2/3} \right)$.
At $N=2^{20}$ and $k=7$, an operator stores about $5{,}500$ group elements for a
one-time $6.7\times10^{7}$ group operations, and the recurring secret-side work
falls from $2\times10^{6}$ to about $2\times10^{4}$.

The work is not removed.
It is relocated.
Each table belongs to one identity, so an operator holding many pays for each
separately, and the cost that cannot be amortized across identities survives as a
one-time $\tilde O\left( N^{4/3} \right)$ charge at registration.
Reporting that as an hourly figure requires an assumption about how long an
identity stays registered and a benchmark for the registration proof.
This paper supplies neither.
Whether the folding order can be made to depend on the block, which would restore a
recurring cost, is open.
 
\section{Per-Identity Cost of a TRACE Claim}
\label{app:cost}

The claim of \cref{alg:both} rests on an inner-product argument between the trace
$E$ and a vector $\mathbf b$, which in our variant is the all-ones
vector.
As in \cref{app:construction}, the commitment is a Pedersen commitment and the
argument is that of Daza, R\`afols and Zacharakis~\cite{drz}, not that of
Bulletproofs.
Write $N=2^r$ for the padded trace length, so the argument runs for $r$
rounds.
Each round halves both vectors.
The prover sends commitments to the halves of each vector together with two
cross-term scalars, the inner product of the low half of $E$ with the high
half of $\mathbf b$ and the inner product of the high half of $E$ with the low
half of $\mathbf b$.
A challenge derived from those messages folds each vector into one of half the
length, and the next round repeats on the result.

The commitments on the trace side are multi-scalar multiplications over the
round's vector, so the scalar multiplications across all rounds sum to
$N + N/2 + N/4 + \cdots < 2N$.
The vector side is cheap.
Folding keeps $\mathbf b$ a scalar multiple of the all-ones vector, so its cross
commitments are combinations of sums of dyadic blocks of $[\mathbf s]_1$, which
an operator computes once from the reference string.
The cross terms are inner products over the round's vector, so they contribute
fewer than $2N$ field operations in total.
One claim therefore costs $\Theta(N)$.

One cost is left out of the count above.
Each round also folds the commitment key $[\mathbf r]_1$, which the prover
computes itself, and
that is work of the same order.

The count is what an honest prover pays for one claim.
An operator holding many identities can share part of it, at best reaching
$N^{0.936}$ per identity (\cref{app:amortization}), so the cost it faces stays
close to linear, which is what \cref{sec:example} prices.

\section{How Much Proving Work an Operator Can Share}
\label{app:amortization}

\Cref{app:cost} prices a claim at the cost the honest prover pays.
This appendix asks how much of it an operator running many identities can avoid.
The saving is small.

Some sharing is immediate.
Every claim folds the trace against the same all-ones vector, so in the
argument of \cref{fig:pv-ipa} the whole first round is independent of the key:
\[
  [\beta_{-1}]_1=\sum_{i=N/2+1}^{N}[s_i]_1,
  \qquad
  [\beta_{1}]_1=\sum_{i=1}^{N/2}[s_i]_1,
\]
and the trace-side messages $[\alpha_{\pm1}]_1$ depend on the trace and the
commitment key alone.
An operator computes the first round once for all of its identities, which first
diverge at the first challenge, the point where the key enters through the
Fiat--Shamir label.

The sharing can be pushed further.
After $t$ rounds each message is built from $2^t$ pieces of the trace and $2^t$
pieces of the commitment key, with that identity's challenges supplying the
coefficients.
An operator can compute those pieces once, then evaluate them for each identity
against its own challenges in protocol order, so no challenge is ever guessed.
Fix a batch width $B=2^k$ with $B\le\sqrt N$.
After $k$ rounds every identity's folded vector is a combination of the same $B$
pieces, so producing all $n$ of them at once is a matrix product of shape
$(n\times B)(B\times N/B)$, which fast matrix multiplication computes in fewer
than the obvious $nN$ operations~\cite{strassen}.
The formulas need addition in each operand and multiplication only between them,
so field scalars acting on group elements are admissible.
Each proof then finishes on its own over vectors of length $N/B$.

Counting field operations, group additions, and scalar multiplications at unit
cost, the total is
\[
  T(n,N) \;=\; O\!\left(NB \;+\; nB^2 \;+\; nNB^{\omega-3} \;+\; nN/B\right),
\]
where $\omega$ is the matrix multiplication exponent, and the four terms are the
shared precomputation, the per-identity evaluation, the batched folding, and the
independent tails.
Taking $n=N$ and $B=N^{1/3}$ gives $O(N^{1+\omega/3})$ overall, or
\[
  O\!\left(N^{\omega/3}\right) \;=\; O\!\left(N^{0.936}\right)
\]
per identity at Strassen's exponent $\omega=\log_2 7$, where the saving against
the honest prover's $\Theta(N)$ is a factor $N^{1-\omega/3}$.
No choice of algorithm removes the dependence.
At the theoretical floor $\omega=2$ the cost per identity is still $N^{2/3}$.

Three costs are now known.
The proven lower bound is logarithmic in the trace length
(\cref{thm:ipa-nonamort}).
The best sharing we know costs $N^{0.936}$ per identity.
The honest prover costs $N$.
Sharing therefore saves a factor of $N^{0.064}$, which is $2.4$ at
$N=2^{20}$.
The cost per identity stays close to what the honest prover pays.

\Cref{sec:example} uses the honest prover's cost, measured on hardware, and we do
not reduce that figure.
The construction needs as many identities as the trace is long.
Its preprocessing has not been implemented or
measured~\cite{bensonballard}.
 
\section{A Construction with Stronger Non-Amortizability Guarantees}
\label{sec:prf-cons}

This construction is a variant of our main instantiation in \cref{sec:ipa-construction}. We discuss this only informally to show the feasibility of obtaining stronger guarantees; at the same time this design features a more expensive prover, which may limit its applicability and represents the primary trade off in comparison to our other scheme.

If the main underlying building block in our main scheme was an inner-product argument with some non-amortizability properties, here  we assume the same building block, plus a general-purpose argument (for any NP relation).

\paragraph{The construction.}
Let $F$ be a PRF with outputs in $\mathbb F_p$, and let $\Pi_{\mathsf{gp}}$ be a
general-purpose commit-and-prove\footnote{In the sense of~\cite{CCS:CamFioQue19}.} zero-knowledge argument whose
commitments are Pedersen commitments under the structured key $[\mathbf s]_1$ of
\cref{app:construction}.
Setup, the commitment $\cm$ to the trace, and the signature are as in
\cref{app:construction}, with the parameters of $\Pi_{\mathsf{gp}}$ added to the
reference string.
\begin{itemize}
\item At key generation the party samples a PRF key $k$ and a signature key
  pair, and registers a binding commitment to $k$ together with the
  verification key as $\pk$.
\item To claim, the prover fixes the context
  $\mathsf{ctx}=(\cm,\pk,\nonce)$ and derives a secret vector
  $\mathbf b\in\mathbb F_p^N$,
  \[
    b_i=F_k(i\,\|\,\mathsf{ctx})\quad(i\in[N]),
  \]
  in place of the all-ones vector of \cref{app:construction}.
  It computes $C_{\mathbf b}=[\mathbf b^{\top}\mathbf s]_1$ and
  $v=\langle E,\mathbf b\rangle$.
\item The proof consists of $C_{\mathbf b}$ and $v$, a proof
  $\pi_{\mathsf{PRF}}$, a proof $\pi_{\mathsf{IPA}}$, and a signature.
  The first proof is produced with $\Pi_{\mathsf{gp}}$ and shows that
  $C_{\mathbf b}$ commits to $(F_k(1\|\mathsf{ctx}),\ldots,F_k(N\|\mathsf{ctx}))$
  for the $k$ that $\pk$ commits to.
  The second is produced exactly as in \cref{app:construction}, with
  $C_{\mathbf b}$ in place of $[S]_1$: the statement is
  $X=(\cm,C_{\mathbf b},v)$ of
  \cref{fig:construction}, the witness is $(E,\mathbf b)$, and the Fiat--Shamir
  seed is $\theta=\ro(\mathsf{claim},\cm,\pk,v,C_{\mathbf b},\nonce)$.
  The signature covers all of the above, as in \cref{app:construction}.
\item The verifier checks the signature and $\pi_{\mathsf{PRF}}$, rebuilds $X$
  from $(\cm,C_{\mathbf b},v)$, and checks $\pi_{\mathsf{IPA}}$.
\end{itemize}
Deriving $\mathbf b$ from $k$ rather than sampling it keeps the claim a deterministic function of the block and the registered
key, so the prover cannot grind its eligibility.

\paragraph{Non-amortizability.}
In $\Pi_{\mathsf{IPA}}$ no key enters the witness, which is $(E,\mathbf 1)$.
Everything that depends on the trace is then shared across identities, which is the sharing exploited in \cref{app:amortization}, and
the only work that a fresh key makes fresh comes from the $r-1$ Fiat--Shamir
challenges, each worth $\log p$ bits.
This in turn yields the non-amortizability overhead $\Delta=(r-1)\log p/\log\lambda$ of
\cref{thm:ipa-nonamort}.

Here each identity carries $N$ fresh field elements per block instead.
Write $\mathbf b^{(j)}\in\mathbb F_p^N$ for the secret vector of identity $j$
and $b^{(j)}_i$ for its entries, following the convention of the proof of
\cref{thm:ipa-nonamort}, where $x^{(j)}_i$ is the $i$-th challenge of identity $j$.
The prover has to output $C_{\mathbf b^{(j)}}$, guaranteed to have high computational entropy by the soundness of $\pi_{\mathsf{PRF}}$
and binding of the commitment\footnote{More precisely, we need PRFs whose output is pseudorandom even for adversarially chosen keys. These can be realized in standard ways. See for example \cite[Appendix C]{EC:DGKR18}.}.
The global counting argument of the proof of \cref{thm:ipa-nonamort} then
applies with the vectors $\mathbf b^{(j)}$ in the role of the challenges.
A description of $Q$ group operations has $O(Q\log\lambda)$ bits and determines
every $C_{\mathbf b^{(j)}}$, hence all of $\mathbf b^{(1)},\ldots,\mathbf b^{(n)}$, which
together carry $nN\log p$ bits.
Informally, this gives
\[
  Q=\Omega\!\left(\frac{nN\log p}{\log\lambda}\right),
  \qquad\text{that is,}\qquad
  \Delta_{\mathsf{PRF}}=\frac{N\log p}{\log\lambda},
\]
the bound of \cref{thm:msm-entropy-lower-bound} with $\kappa=N\log p$, paid once
per identity.

\end{document}